	\providecommand\BibTeX{{%
			\normalfont B\kern-0.5em{\scshape i\kern-0.25em b}\kern-0.8em\TeX}}}
\theoremstyle{definition}
\newtheorem{definition}{Definition}[]
\newcommand{\ours}{\textsc{PEAR}\xspace}
\newcommand{\ourslong}{\bus{P}attern-based \bus{E}dge-weight \bus{A}ssignment on g\bus{R}aphs}
\newcommand{\myuline}[1]{%
	\uline{\phantom{#1}}%
	\llap{\contour{white}{#1}}%
}
\newcommand{\smallsection}[1]{{\noindent {\bolden{\myuline{#1}}}}}
\newcommand{\bus}[1]{\textbf{\underline{\smash{#1}}}}
\newcommand{\overbar}[1]{\mkern 1.5mu\overline{\mkern-1.5mu#1\mkern-1.5mu}\mkern 1.5mu}
\definecolor{peace}{RGB}{228, 26, 28}
\definecolor{love}{RGB}{55, 126, 184}
\definecolor{joy}{RGB}{77, 175, 74}
\definecolor{kindness}{RGB}{152, 78, 163}
\newcommand{\natnum}{\mathbb{N}}
\newcommand{\setpar}[1]{\{#1\}}
\newcommand{\AlignFootnote}[1]{%
	\ifmeasuring@
	\else
	\iffirstchoice@
	\footnote{#1}%
	\fi
	\fi}
\newcommand\bolden[1]{{\boldmath\bfseries#1}}
\newtheorem{theorem}{Theorem}%  meant for continuous numbers
\newtheorem{remark}{Remark}%
\newtheorem{observation}{Observation}%
\newtheorem{problem}{Problem}%
\title{Interplay between Topology and Edge Weights in Real-World Graphs: Concepts, Patterns, and an Algorithm}
\author{Fanchen Bu\thanks{School of Electrical Engineering, KAIST, Daejeon, South Korea, boqvezen97@kaist.ac.kr}, \
Shinhwan Kang\thanks{Kim Jaechul Graduate School of AI, KAIST, Seoul, South Korea, shinhwan.kang@kaist.ac.kr}, \ and 
Kijung Shin\thanks{Kim Jaechul Graduate School of AI and School of Electrical Engineering, KAIST, Seoul, South Korea, kijungs@kaist.ac.kr}}
\date{}
\begin{document}

\maketitle

\begin{abstract}   
	What are the relations between the edge weights and the topology in real-world graphs?
	Given only the topology of a graph, how can we assign realistic weights to its edges based on the relations? 
	Several trials have been done for \textit{edge-weight prediction} where some unknown edge weights are predicted with most edge weights known.
	There are also existing works on generating both topology and edge weights of weighted graphs.
	Differently, we are interested in generating edge weights {that are realistic in a macroscopic scope}, merely from the topology, which is unexplored and challenging.
	To this end, we explore and exploit the patterns involving edge weights and topology in real-world graphs.
	Specifically, we divide each graph into \textit{layers} where each layer consists of the edges with weights at least
	a threshold.
	We observe consistent and surprising patterns appearing in multiple layers:
	the similarity between being adjacent and having high weights,
	and the nearly-linear growth of the fraction of edges having high weights with the number of common neighbors.
	We also observe a power-law pattern that connects the layers.
	Based on the observations, we propose \ours, an algorithm assigning realistic edge weights to a given topology.
	The algorithm relies on only \textit{two} parameters, preserves \textit{all} the observed patterns, and produces more realistic weights than the baseline methods with {more parameters}.
\end{abstract}
% \keywords{Real-World Graphs, Edge Weights, Network Patterns}

\maketitle

\section{Introduction}\label{sec:intro}
In weighted graphs, the edge weights reveal the heterogeneity of edges and enrich the information provided by the topology~\citep{newman2004analysis}.
In practice, weighted graphs have been widely used to model traffic~\citep{de2007structure}, biological interactions~\citep{aittokallio2006graph}, personal preference~\citep{liu2009personal}, etc.
The relation between topology and edge weights, therefore, attracts much attention.
A typical scenario where the two kinds of information are integrated is \textit{edge-weight prediction}~\citep{fu2018link, rotabi2017detecting}.
The target of edge-weight prediction is to predict the unknown edge weights using the given topological and edge-weight information, where usually most of the edge weights are given as the inputs.
Another related direction is to generate both the topology and the edge weights of weighted graphs~\citep{akoglu2008rtm, mcglohon2008weighted, yang2021hidden}.

However, not much has been explored about the relation between the pure topology and the edge weights in a graph, despite the importance of the relation.
In some previous trials, the problem of classifying edges into strong ones and weak ones  by assuming strong triadic closure~\citep{sintos2014using} (STC) is considered. 
The STC assumption forbids open triangles (also called triads or wedges) with two strong edges and aims to maximize the number of strong edges.
However, the diversity of edge weights is over-simplified in such a setting.
Moreover, it has been pointed out by~\cite{adriaens2020relaxing} that the STC assumption with a maximum number of strong edges is often far from reality.
The generalized version considered by~\cite{adriaens2020relaxing} still has room for improvement, especially w.r.t the empirical grounds, as shown in the experimental results.

Specifically, we study how the edge weights in real-world graphs are related to the topology \textit{in a macroscopic way}, which allows us to generate realistic edge weights when given an unweighted topology.
We would like to emphasize that we do not aim to assign edge weights with small errors w.r.t each individual edge.
We are motivated by the following practical applications:
\begin{itemize}
	\item \textbf{Edge weight anonymization.}
	In social networks, due to data privacy issues, sometimes only the binary connections are publicly accessible, while the detailed edge weights should not be publicized~\citep{steinhaeuser2008community, skarkala2012privacy}.
	Using the macroscopic patterns, we are able to generate realistic edge weights for a given topology and publicize the generated weighted graph to researchers and practitioners as a benchmark dataset, without revealing the true edge weights.
	\item \textbf{Anomaly detection.}
	In communication networks, the edge weights usually represent the frequency or intensity of the communication between the entities.
	Using the patterns observed on real-world graphs, we may detect anomalous edge weights that deviate from the patterns, and they may correspond to entities that have abnormally frequent or intensive communication~\citep{thottan2010anomaly, akoglu2015graph}. 
	\item \textbf{Community detection.}
	Community detection is a fundamental problem in network analysis~\citep{fortunato2010community}. 
	Edge weights are known to be helpful for community detection because they provide additional information about the strength and importance of connections between nodes~\citep{liu2014weighted, he2021statistical}, and thus assigning edge weights to unweighted graphs has the potential to enhance the performance of community detection algorithms~\citep{berry2011tolerating}.\footnote{{See Appendix~\ref{app:application_com_det} for some illustrative experiments, where we use edge weights generated by our proposed method to enhance the performance of a community detection method.}}
	\color{black}
\end{itemize}

We introduce and use a new tool called \textit{layers} to study weighted graphs in a hierarchical way, where each layer is a subgraph that consists of the edges with weights exceeding some threshold.
We examine eleven real-world graphs from five different domains and observe consistently strong correlations between the number of common neighbors (CNs) of an edge and the weight of the edge.
Although the information of CNs has been widely used in \textit{link prediction}~\citep{wang2015link} and used to indicate the significance of \textit{individual} edges~\citep{ahmad2020missing,cao2015grarep,zhu2016link}, to the best of our knowledge, we are the first to study the \textit{quantitative} patterns between the information of CNs and edge weights in a \textit{macroscopic} scope.
We observe consistent within-layer patterns in multiple layers:
(1) the nearly-linear growth of fraction of high-weight edges with the number of CNs,
(2) the relation between being adjacent and having high weights 
{(specifically, the relation between the fraction of high-weight edges and that of adjacent pairs with the same name number of CNs)}, and
across the layers, we observe a power-law correlation between the overall fraction of high-weight edges and the counterpart within the group of edges sharing no CNs.
Based on the observations, we propose \ours (\ourslong), an algorithm for assigning realistic edge weights to a given topology by preserving all the observed macroscopic patterns.
The proposed algorithm has only \textit{two} parameters.
On multiple real-world datasets, \ours outperforms the baseline methods using the same number of, or even more, parameters, producing more realistic edge weights in several different aspects w.r.t different macroscopic network statistics.

In short, our contributions are five-fold:
\begin{itemize}
	\item \textbf{New problem.}
	We introduce a new challenging problem:
	realistic assignment of edge weights \textit{merely} based on \textit{topology}.
	\item \textbf{New perspective.}
	We introduce the concept of \textit{layers}, which provides a new perspective to study weighted graphs.
	\item \textbf{Patterns.}
	We extensively study {eleven} real-world graphs and discover the various relations between topology and edge weights.
	\item \textbf{Algorithm.}
	We propose \ours, a weight-assignment algorithm based on the observed patterns. The algorithm has only \textit{two} parameters yet produces realistic edge weights to a given topology.
	\item \textbf{Experiments.}
	We evaluate \ours on real-world graphs.
	Without sophisticated fine-tuning, \ours overall outperforms the baseline methods with {more parameters,}
	producing more realistic edge-weights w.r.t node-degree and edge-CN distributions, average clustering coefficient, {and a graph distance measure computed by NetSimile~\citep{berlingerio2012netsimile}}. 
\end{itemize}

\smallsection{Roadmap.}
The remaining part of the paper is organized as follows.
In Section~\ref{sec:relwk}, we discuss related work.
In Section~\ref{sec:prelim},
we provide some preliminaries. 
In Section~\ref{sec:concepts}, we propose some new concepts.
In Section~\ref{sec:patterns},
we describe the patterns that we observe on real-world datasets.
In Section~\ref{sec:model}, we formulate our observations and, based on them, propose our algorithm, \ours.
In Section~\ref{sec:eval},
the empirical evaluation of \ours on real-world datasets is demonstrated.
In Section~\ref{sec:disc_concl}, we discuss some potential limitations of our work and future directions, and lastly, conclude the paper.

\smallsection{Reproducibility.}
The code and datasets are available online~\citep{onlineSuppl}.\footnote{\url{https://github.com/bokveizen/topology-edge-weight-interplay}}

\section{Related work}\label{sec:relwk}
\smallsection{Edge-weight prediction.}
In the early trials of edge-weight prediction~\citep{aicher2015learning,zhao2015prediction,zhu2016weight}, the problem is dealt with as a natural extension of the link prediction~\citep{martinez2016survey} problem.
Specifically, the proposed link-prediction algorithms assign scores to node pairs as the likelihood of edge existence, and the scores are also naturally used as the estimated edge weights.
More recently, \cite{fu2018link} use multiple topological features to predict the unknown edge weights in a supervised manner.
Specifically, they fit a regression model to the known edge weights with the features and use the fitted model to predict the unknown ones.
The main differences between the problem that we focus on in this work and the edge-weight prediction problem are:
(1) in the edge-weight prediction problem, most (e.g., 80\%~\citep{aicher2015learning} or 90\%~\citep{fu2018link,zhao2015prediction,zhu2016weight}) of the edge weights are assumed to be known and are given together with topology as the inputs, while we consider the scenarios where we only have access to the topology and we have \textit{none} known edge weights; and
(2) the target of the edge-weight prediction problem is to estimate the weights of individual edges in a \textit{microscopic} way, while we aim to generate realistic edge weights for a given topology preserving the \textit{macroscopic} patterns that we observe in real-world graphs.
Notably, there is another independent research problem that focuses on the edge-weight prediction of weighted signed graphs, which has essential differences from the research problem in this paper.
Specifically, the techniques proposed in a recent work studying that problem \citep{kumar2016edge} are specially designed for weighted signed graphs representing pairwise relations such as like/dislike and trust/distrust, which cannot be directly applied to the scenarios that we focus on where the edge weights represent the repetitions of the corresponding binary relations.

\smallsection{Weighted-graph generation.}
The other trials exploring the interplay between topology and edge weights, include the weighted-graph generation problem~\citep{akoglu2008rtm, mcglohon2008weighted,yang2021hidden}.
In those works, the authors specifically study the evolution of both topology and edge weights over time, and they propose algorithms that generate both topology and edge weights of weighted graphs.
Although some simple static patterns (e.g., power-law or geometric weight distributions) are also discussed in those works, the problem that we focus on, generating edge weights for a given topology, is essentially and technically different with the weighted-graph generation problem since in our problem the topology is given and thus fixed.

\smallsection{Strong triadic closure.}
The concept of strong triadic closure (STC) is first proposed by~\cite{sintos2014using}, where the authors consider the problem of classifying edges into strong ones and weak ones.
They define that a graph satisfies the STC property if there exists no open triangle with two strong edges,\footnote{Formally, the STC property requires that, for any three nodes $u$, $v$, and $w$, if both of the edges $(u, v)$ and $(u, w)$ are strong, then the edge $(v, w)$ must exist.}
and they assume that graphs often satisfy the STC property and have many strong edges.
Therefore, they specifically consider the problem of maximizing the number of strong edges while satisfying the STC property.
However, only two types of edge weights are considered in this problem, while real-world graphs often have a high diversity of edge weights (see, e.g., the datasets in Table~\ref{tab:datasets}).
Moreover, this optimization problem has both theoretical (it can have many optimal solutions) and practical (real-world graphs often do not have many strong edges) limitations, as pointed out by \cite{adriaens2020relaxing}.
Even though the above problem has been extended to edge weights of a wider range with other modifications~\citep{adriaens2020relaxing}, the extended version still has room for improvement w.r.t the empirical grounds, and the methods fail to predict the edge weights of real-world graphs accurately.
Specifically, the predicted edge weights have almost zero correlation with the ground truth on many datasets, as shown by~\cite{adriaens2020relaxing}.

To the best of our knowledge, we are the first to consider the problem of assigning realistic edge weights to a given topology by trying to preserve patterns observed on real-world graphs.

\section{Preliminaries}\label{sec:prelim}
In this section, we provide some mathematical and notational backgrounds.

A \textit{weighted graph} $G = (V, E, W)$ consists of a node set $V = V(G)$, an edge set $E = E(G) \in \binom{V}{2}$, and edge weights $W = W(G)$.
By ignoring the edge weights, we have the underlying \textit{unweighted} graph $\overbar{G} = (V, E)$ of $G$.
For each edge $e \in E$, $W_e$ is the \textit{weight} of $e$.
In this work, we focus on graphs with positive integer edge weights,
i.e., $W \in \natnum^E$, where $\natnum$ is the set of positive integers,
where each edge weight represents the number of occurrences of the corresponding edge.
Note that the analysis on weighted graphs is mainly done on the graphs with integer edge weights~\citep{newman2004analysis},
and for graphs with non-integer edge weights, we may round each edge weight to the nearest integer.
All graphs are assumed to be undirected and without self-loops.
Thus, $(u, v)$ and $(v, u)$ represent the same edge (i.e., the set $\setpar{u, v}$) between two nodes $u \neq v \in V$.

The concepts below use \textit{only the topology} of $G$ (i.e., $V$ and $E$).
For each node $v \in V$, $N_{v}(G) = \setpar{v' \in V: (v, v') \in E}$ is the \textit{neighborhood} of $v$ in $G$, and $d_v(G) = \vert N_{v}(G) \vert$ is the \textit{degree} of $v$ in $G$. %{Again, note that the degree is unweighted.}
For two nodes $u, v \in V$, $CN_{uv}(G) = N_{u}(G) \cap N_{v}(G)$ is the set of \textit{common neighbors} (CNs) of $u$ and $v$ in $G$, and we say the two nodes $u$ and $v$ \textit{share} the common neighbors in $CN_{uv}(G)$.
For an edge $e = (u, v)$, we use $CN_e(G)$ to denote $CN_{uv}(G)$, and we say that the edge $e$ \textit{shares} the common neighbors in $CN_e(G)$. Sometimes, the number $\vert CN_e(G) \vert$ of common neighbors shared by the two endpoints of $e$ is called the \textit{embeddedness}~\citep{cleaver2002reinventing} of $e$.

Given $G = (V, E, W)$, the \textit{line graph}~\citep{harary1960some} of $G$ is the graph $L(G) = (E, X)$, where the nodes of $L(G)$ one-to-one correspond to the edges of $G$ and two nodes of $L(G)$ are adjacent to each other if and only if the two corresponding edges in $G$ share a common endpoint.
{Given $k \in \natnum$, the $k$-core~\citep{seidman1983network} $C_k(G)$ of $G$ is the maximal subgraph of $G$ where each node of $C_k(G)$ has degree $k$ within it}.

\begin{table}[t!]
	\begin{center}
		\caption{Notations and abbreviations.}\label{tab:notations}
		{
			\resizebox{\linewidth}{!}{%
				\begin{tabular}{p{0.3\linewidth}  p{0.7\linewidth}}
					\toprule
					\textbf{Notation/Abbreviation} & \textbf{Definition/Meaning}\\
					\midrule
					$G = (V, E, W)$         & a graph with a node set $V$, a edge set $E$, and edge weights $W$ \\
					$\overbar{G} = (V, E)$  & the underlying unweighted graph of $G$ \\
					$W_e$                   & the edge weight of $e \in E$ \\
					$N_v(G)$                & the set of neighbors of $v \in V$ \\
					$d_v(G)$                & the degree of $v \in V$ \\
					$CN_{uv}(G)$            & the set of common neighbors of $u, v \in V$ \\
					$G_i = (V_i, E_i, W_i)$ & the layer-$i$ of $G$ whose edge set consists of the edges with weights $\geq i$ in $G$ \\            
					$R_i = \binom{V_i}{2}$  & the set of all the node pairs in $G_i$ \\				
					$E_{c; i}(G)$           & the set of edges sharing $c$ common neighbors in $G_i$ \\
					$R_{c; i}(G)$           & the set of pairs sharing $c$ common neighbors in $G_i$ \\
					$f_{overall; i}(G)$ ($\tilde{f}_{overall; i}(G)$)     & the overall fraction of weighty edges (adjacent pairs) w.r.t $G$ and $i$ (Def.~\ref{def:SEs_and_FSEs}) \\
					$f_{c; i}(G)$ ($\tilde{f}_{c; i}(G)$) & the fraction of weighty edges (adjacent pairs) within $E_{c; i}(G)$ (Def.~\ref{def:SEs_and_FSEs}) \\
					\midrule                   
					CN                      & common neighbor \\
					\ours                   & \ourslong \\
					FoWE                    & fraction of weighty edges \\                    
					FoAP                    & fraction of adjacent pairs \\
					\bottomrule
				\end{tabular}
			}
		}
	\end{center}
\end{table}

We list the frequently used notations {and abbreviations} in Table~\ref{tab:notations}.
In the notations, the input graph $G$ can be omitted when the context is clear.

In this paper, we consider the problem where given the topology of a graph, we aim to assign realistic edge weights to the topology based on several patterns observed on real-world graphs, 
{where each edge weight is a positive integer representing the number of occurrences of the corresponding edge.}
We formulate the considered problem (informally at this moment) as follows:

\begin{problem}[Informal]\label{pro:assign_weights}
	Given an unweighted graph $\overbar{G} = (V, E)$, we aim to generate edge weights 
	$W: E \rightarrow \natnum$ that satisfy \textit{a group of realistic properties regarding the interplay between topology and edge weights}, {where each edge weight is a positive integer representing the number of occurrences of the corresponding edge.}
\end{problem}

We shall first present the patterns (i.e., the group of realistic properties) that we observe on real-world graphs,
and then we provide a formal problem statement by formulating the patterns as mathematical properties.
Finally, we propose an algorithm that assigns realistic edge weights to a given topology while preserving the formulated properties.

\section{Proposed concepts}\label{sec:concepts}
In this section, we introduce the proposed concepts.
We will use them to describe our observations and design our algorithm.

When given an unweighted graph $\overbar{G} = (V, E)$, the topology divides the pairs $\binom{V}{2}$ of nodes into two categories.
Each pair $(u, v) \in \binom{V}{2}$ of nodes is either \textit{adjacent} ($(u, v) \in E$, weight $\geq 1$) or \textit{distant} ($(u, v) \notin E$, weight $< 1$).
When we have a weighted graph $G = (V, E, W)$, we can similarly set different weight thresholds $i \in \natnum$ and extract the subgraph consisting of edges with weight $\geq i$, which gives the following definition of \textit{layers}.
\begin{definition}[Layers]\label{def:layers}
	Given $G = (V, E, W)$ and $i \in N$, the \textit{layer}-$i$ of $G$ is the weighted graph $G_i = (V_i, E_i, W_i)$ obtained from $G$ by taking the edges with weights {greater than or equal to} $i$.\footnote{$G_1$, the layer-$1$ of $G$, is identical to the original graph $G$.}
	Formally, $E_i = E_i(G) = \setpar{e \in E: W_e \geq i}$, and $W_i = W_i(G)$ satisfies that $W_i(e) = W(e), \forall e \in E_i$.
	{We also define all the possible node pairs $R_i = R_i(G) = \binom{V_i}{2}$.}
\end{definition}
Based on the concept of layers, we also define the following related concepts, weighty edges (WEs) and fraction of weighty edges (FoWE), w.r.t each layer of a graph.
Intuitively, in each layer, the weighty edges are the edges with weights higher than the threshold determined by the layer.
Notably, we define overall FoWEs for the whole layer, and we also define FoWE w.r.t each number $c$ of common neighbors (CNs).
\begin{definition}[Fractions of weighty edges and adjacent pairs]\label{def:SEs_and_FSEs}
	Given $G = (V, E, W)$ and $i \in \natnum$, we call an edge $e \in E_i$ a \textit{weighty edge} (w.r.t $G$ and $i$) if and only if $W_{e} > i$ (i.e., $e \in E_{i+1}$), where recall that $E_i$ is the edge set of $G_i$.
	The \textit{overall fraction of weighty edges} $f_{overall;i}(G)$ is defined as $\vert E_{i + 1}\vert / \vert E_{i}\vert$.
	Further given $c \in \natnum$, let $E_{c;i} \subseteq E_i$ denote the set of edges sharing $c$ CNs (i.e., edges whose endpoints share $c$ CNs) in $G_i$.
	The \textit{fraction of weighty edges} (w.r.t $G$, $i$, and $c$) $f_{c;i}(G)$ is defined as $\vert E_{c;i} \cap E_{i+1}\vert / \vert E_{c;i}\vert$.
	{Similarly, we define the \textit{overall fraction of adjacent pairs} $\tilde{f}_{overall; i}(G) = \vert E_i \vert / \vert R_i \vert$, as well as the \textit{fraction of adjacent pairs} (w.r.t $G$, $i$, and $c$) $\tilde{f}_{c; i}(G) = \vert E_{c; i} \vert / \vert R_{c; i} \vert$ for each $c$, where $R_{c; i}$ is the set of pairs sharing $c$ CNs in $G_i$.}\footnote{The fraction of adjacent pairs can be different from the \textit{density} of the corresponding induced subgraph since $R_{c; i}$'s are defined w.r.t pairs not nodes.}
\end{definition}

\begin{table}[t!]
	\begin{center}
		\caption{Some basic statistics (number of nodes and number of edges in each of the first four layers) of the eleven real-world datasets~\citep{opsahl2013triadic,benson2018simplicial,paranjape2017motifs,sinha2015overview} from five domains used in our empirical study. The datasets are grouped w.r.t their domains.}
		\label{tab:datasets}
		\resizebox{0.8\linewidth}{!}{%
			\begin{tabular}{ l|c|c|c|c|c }
				\toprule
				\textbf{dataset} & $\vert V\vert$ & $\vert E\vert = \vert E_1\vert$ & $\vert E_2\vert$ & $\vert E_3\vert$ & $\vert E_4\vert$ \\
				\midrule
				OF & 897 & 71,380 & 47,266 (66.2$\%$) & 35,456 (49.7$\%$) & 28,546 (40.0$\%$) \\
				\midrule
				FL & 2,905 & 15,645 & 4,608 (29.5$\%$) & 1,507 (9.6$\%$) & 564 (3.6$\%$) \\
				\midrule
				th-UB & 82,075 & 182,648 & 7,297 (4.0$\%$) & 2,090 (1.1$\%$) & 965 (0.5$\%$) \\
				th-MA & 152,702 & 1,088,735 & 128,400 (11.8$\%$) & 48,605 (4.5$\%$) & 26,121 (2.4$\%$) \\
				th-SO & 2,301,070 & 20,989,078 & 1,168,210 (5.6$\%$) & 350,871 (1.7$\%$) & 170,618 (0.8$\%$) \\
				\midrule
				sx-UB & 152,599 & 453,221 & 135,948 (30.0$\%$) & 56,115 (12.4$\%$) & 28,029 (6.2$\%$) \\
				sx-MA & 24,668 & 187,939 & 74,493 (39.6$\%$) & 36,604 (19.5$\%$) & 21,364 (11.4$\%$) \\
				sx-SO & 2,572,345 & 28,177,464 & 9,871,784 (35.0$\%$) & 4,137,454 (14.7$\%$) & 2,055,034 (7.3$\%$)\\
				sx-SU & 189,191 & 712,870 & 216,296 (30.3$\%$) & 82,475 (11.6$\%$) & 37,655 (5.3$\%$)\\
				\midrule
				co-DB & 1,654,109 & 7,713,116 & 2,269,679 (29.4$\%$) & 1,085,489 (14.1$\%$) & 654,182 (8.5$\%$) \\
				co-GE & 898,648 & 4,891,112 & 1,055,077 (21.6$\%$) & 446,833 (9.1$\%$) & 246,944 (5.1$\%$) \\
				\bottomrule %
			\end{tabular}
		}
	\end{center}
\end{table}

\section{Patterns in real-world graphs}\label{sec:patterns}
In this section, we analyze {eleven} real-world graphs from different domains and extract patterns w.r.t the interplay between topology and edge weights.

\smallsection{Datasets.}
We use 11 {publicly-available} real-world datasets from five different domains.
In Table~\ref{tab:datasets}, we give some basic statistics (the number of nodes and the number of edges in each of the first four layers) of the datasets we study in this work.
In all the datasets, the edge weights can be interpreted as the time of occurrences of the corresponding binary relation.
We take the largest connected component of each graph.
In the \textit{OF} dataset~\citep{opsahl2013triadic}, the nodes are users and an edge represents communication within a {blog} post.
In the \textit{FL} (flights) dataset~\citep{opsahl2011anchorage}, the nodes are airports and an edge represent a flight between two airports.
In the \textit{th} (threads) datasets~\citep{benson2018simplicial}, the nodes are users and an edge exists between two users if they participate in the same thread within 24 hours.
The \textit{sx} (stack exchange) datasets~\citep{paranjape2017motifs} are extracted from the same websites as the \textit{th} datasets, but here an edge exists if one user answers or comments on a question of another, and the two groups of datasets are essentially different (see also Table~\ref{tab:datasets} for the statistical difference).
In the \textit{co} (coauthorship) datasets~\citep{benson2018simplicial, sinha2015overview}, the nodes are authors and an edge exists between the two authors if they coauthor a paper.

\subsection{Why the number of common neighbors?}\label{subsec:why_cns}
First, we shall show that the numbers of common neighbors (CNs) are consistently indicative of edge weights even when compared with the more complicated ones.
We compare the numbers of CNs with several other quantities widely used in link prediction \citep{martinez2016survey} and edge-weight prediction \citep{fu2018link}.
Notably, the number of CNs shared by two adjacent nodes is equal to the number of triangles involving the two nodes.
Real-world graphs are rich in triangles~\citep{tsourakakis2008fast, shin2020fast}.
For special graphs, e.g., bipartite graphs where no triangle exists, we can consider \textit{butterflies} ($(2, 2)$-bicliques) instead~\citep{sanei2018butterfly}.

\begin{table*}[t!]
	\begin{center}
		\caption{\textbf{The numbers of common neighbors are simple yet indicative.}
			We report the point-biserial correlation coefficients between the sequences of each quantity and %that of
			the binary indicators of repetition. The two strongest correlations are highlighted in bold.
			{See Appendix~\ref{app:why_cn_auc} for the results measured by the area under the ROC curve (AUC).}
			Among all the considered quantities, the number of common neighbors is simplest while achieving the highest average rank in average over all the datasets.
		}
		\label{tab:pearson-rep}
		\resizebox{0.9\linewidth}{!}{%
			\begin{tabular}{ l *{14}{|c} }
				\toprule
				\textbf{dataset} & NC & SA & JC & HP & HD & SI & LI & AA & RA & PA & FM & DL & EC 
				& LP \\
				\midrule
				OF & 0.33 & 0.20 & 0.20 & -0.02 & 0.21 & 0.21 & -0.13 & \textbf{0.34} & \textbf{0.35} & 0.33 & 0.11 & 0.32 & 0.26 & 0.33 \\
				FL & 0.32 & 0.26 & 0.26 & 0.19 & 0.24 & 0.26 & -0.06 & \textbf{0.35} & \textbf{0.35} & 0.21 & 0.08 & 0.18 & 0.17 & 0.31 \\
				th-UB & \textbf{0.48} & 0.02 & 0.00 & 0.03 & 0.00 & 0.01 & -0.05 & 0.47 & 0.40 & 0.33 & 0.26 & 0.21 & 0.37 & \textbf{0.48} \\
				th-MA & \textbf{0.45} & 0.22 & 0.15 & 0.09 & 0.15 & 0.18 & -0.05 & 0.44 & 0.35 & 0.33 & 0.40 & 0.25 & 0.38 & \textbf{0.46} \\
				th-SO & \textbf{0.38} & 0.11 & 0.08 & 0.06 & 0.08 & 0.09 & -0.03 & \textbf{0.39} & 0.33 & 0.22 & 0.33 & 0.18 & 0.26 & 0.37 \\
				sx-UB & \textbf{0.15} & 0.11 & 0.08 & 0.09 & 0.07 & 0.08 & -0.00 & 0.13 & 0.10 & 0.09 & 0.12 & 0.09 & 0.14 & \textbf{0.15} \\
				sx-MA & \textbf{0.25} & 0.24 & 0.21 & 0.12 & 0.19 & 0.21 & -0.02 & \textbf{0.25} & 0.22 & 0.19 & 0.19 & 0.16 & 0.20 & 0.25 \\
				sx-SO & 0.10 & \textbf{0.11} & 0.08 & 0.08 & 0.07 & 0.08 & 0.00 & 0.10 & 0.07 & 0.05 & 0.10 & 0.07 & 0.07 & \textbf{0.10} \\
				sx-SU & \textbf{0.14} & 0.11 & 0.08 & 0.09 & 0.07 & 0.08 & -0.00 & 0.12 & 0.08 & 0.08 & 0.11 & 0.08 & 0.13 & \textbf{0.15} \\
				co-DB & \textbf{0.20} & -0.08 & -0.09 & -0.05 & -0.08 & -0.08 & -0.16 & \textbf{0.22} & 0.20 & 0.03 & 0.06 & 0.07 & 0.14 & 0.20 \\
				co-GE & \textbf{0.30} & -0.07 & -0.08 & -0.08 & -0.07 & -0.06 & -0.16 & \textbf{0.32} & 0.26 & 0.16 & 0.19 & 0.19 & 0.22 & 0.30 \\
				\midrule
				avg. & \textbf{0.28} & 0.11 & 0.09 & 0.05 & 0.09 & 0.10 & -0.06 & \textbf{0.28} & 0.24 & 0.18 & 0.18 & 0.16 & 0.21 & 0.28 \\
				avg. rank & \textbf{2.2} & 7.5 & 10.5 & 10.6 & 10.8 & 8.9 & 14.0 & 2.5 & 5.5 & 8.2 & 7.1 & 8.5 & 6.5 & \textbf{2.4} \\
				\bottomrule 
			\end{tabular}
		}
	\end{center}
\end{table*}

Given a graph $G = (V, E, W)$, for each edge $(u, v) \in E$, we consider the following quantities, using only the topology ($V$ and $E$):
\begin{itemize}
	\item \textbf{NC (Number of common neighbors, also called \textit{embeddedness}}~\citep{cleaver2002reinventing}). $NC_{uv} = \vert CN_{uv}\vert$.
	\item \textbf{SA (Salton index)}~\citep{salton1983introduction}. $SA_{uv} = NC_{uv} / \sqrt{d_u \cdot d_v}$.
	\item \textbf{JC (Jaccard index)}~\citep{levandowsky1971distance}. $JC_{uv} = NC_{uv} / \vert N_u \cup N_v\vert$.
	\item \textbf{HP (Hub-promoted)}~\citep{ravasz2002hierarchical}. $HP_{uv} = NC_{uv} / \min(d_u, d_v)$.
	\item \textbf{HD (Hub-depressed)}~\citep{ravasz2002hierarchical}. $HD_{uv} = NC_{uv} / \max(d_u, d_v)$.
	\item \textbf{SI (S{\o}rensen index)}~\citep{sorensen1948method}. $SI_{uv} = 2NC_{uv} / (d_u + d_v)$.
	\item \textbf{LI (Leicht-Holme-Newman index)}~\citep{leicht2006vertex}. $LI_{uv} = NC_{uv} / (d_u \cdot d_v)$.
	\item \textbf{AA (Adamic-Adar index)}~\citep{adamic2003friends}. $AA_{uv} = \sum_{x \in CN_{uv}} 1 / \log d_{x}$.
	\item \textbf{RA (Resource allocation)}~\citep{zhou2009predicting}. $RA_{uv} = \sum_{x \in CN_{uv}} 1 / d_{x}$.
	\item \textbf{PA (Preferential attachment)}~\citep{albert2002statistical}. $PA_{uv} = d_u \cdot d_v$.
	\item \textbf{FM (Friends-measure)}~\citep{fire2011link}. $FM_{uv} = \vert \setpar{(x \in N_u, y \in N_v): x = y \lor (x, y) \in E}\vert  = NC_{uv} + \vert \setpar{(x \in N_u, y \in N_v): (x, y) \in E}\vert$.
	\item \textbf{DL (Degree in the line graph)}. %\footnote{We do not need to truly compute the line graph for this quantity.})}.
$DK_{uv} = d_u + d_v - 2$.
\item \textbf{EC (Edge coreness)}. The maximum $k \in \natnum$ such that the edge $(u,v)$ is in $C_k(G)$, the $k$-core~\citep{seidman1983network} of $G$.
\item \textbf{LP (Local path index)}. $LP_{uv} = (A^2)_{uv} + \epsilon(A^3)_{uv}$, where $A $ is the adjacency matrix of $G$. We use $\epsilon = 10^{-3}$ as in \citep{zhou2009predicting}.
\end{itemize}
For each dataset and each considered quantity, we collect the sequence of the quantities of the edges and that of the binary indicators of repetition (i.e., having weight $> 1$), and compute the Point-biserial correlation coefficient~\citep{tate1954correlation} between them.\footnote{The point-biserial correlation measures the correlation between a continuous variable and a discrete variable, and it is mathematically equivalent to the Pearson correlation.}
In Table~\ref{tab:pearson-rep}, we report the results.
Among all the considered quantities, the number of CNs is the \textit{simplest} one while having the \textit{highest} average point-biserial correlation coefficient and the \textit{highest} average ranking w.r.t the correlation with edge repetition over all the datasets.
{See Appendix~\ref{app:why_cn_auc} for the results measured by the area under the ROC curve (AUC).}

{For the four smallest datasets (OF, FL, sx-MA, and th-UB), we also use the four additional quantities with relatively high computational costs.
They are (1) edge betweenness~\citep{girvan2002community}, (2) personalized pagerank~\citep{jeh2003scaling},
and two ``node-centrality'' measures in the line graph $L(G)$ (each node in $L(G)$ corresponds to an edge in $G$):
(3) eigenvector centrality~\citep{bonacich1987power} and (4) pagerank~\citep{page1999pagerank}.}
For all four datasets, NC consistently has a higher correlation than the four quantities mentioned above.
Moreover, in line graphs, we also consider the closeness centrality~\citep{freeman1977set}, the betweenness centrality~\citep{freeman1977set}, and the clustering coefficients~\citep{watts1998collective}.
However, due to the even larger computational costs of these quantities, it is only possible to compute them on the smallest dataset \textit{OF}, and the Pearson correlation coefficients are
0.12, 0.06, and -0.12 for the closeness centrality, the betweenness centrality, and the clustering coefficients, respectively, which are much lower than that of NC, even though they are much more complicated than NC.

Below, for the clarity and brevity of the presentation, we may visualize or report results on a small number of datasets, while similar results are obtained across all datasets. The full results on all the datasets are available in the supplementary material~\citep{onlineSuppl}.

\begin{figure*}[t!]
\centering
\includegraphics[width=\linewidth]{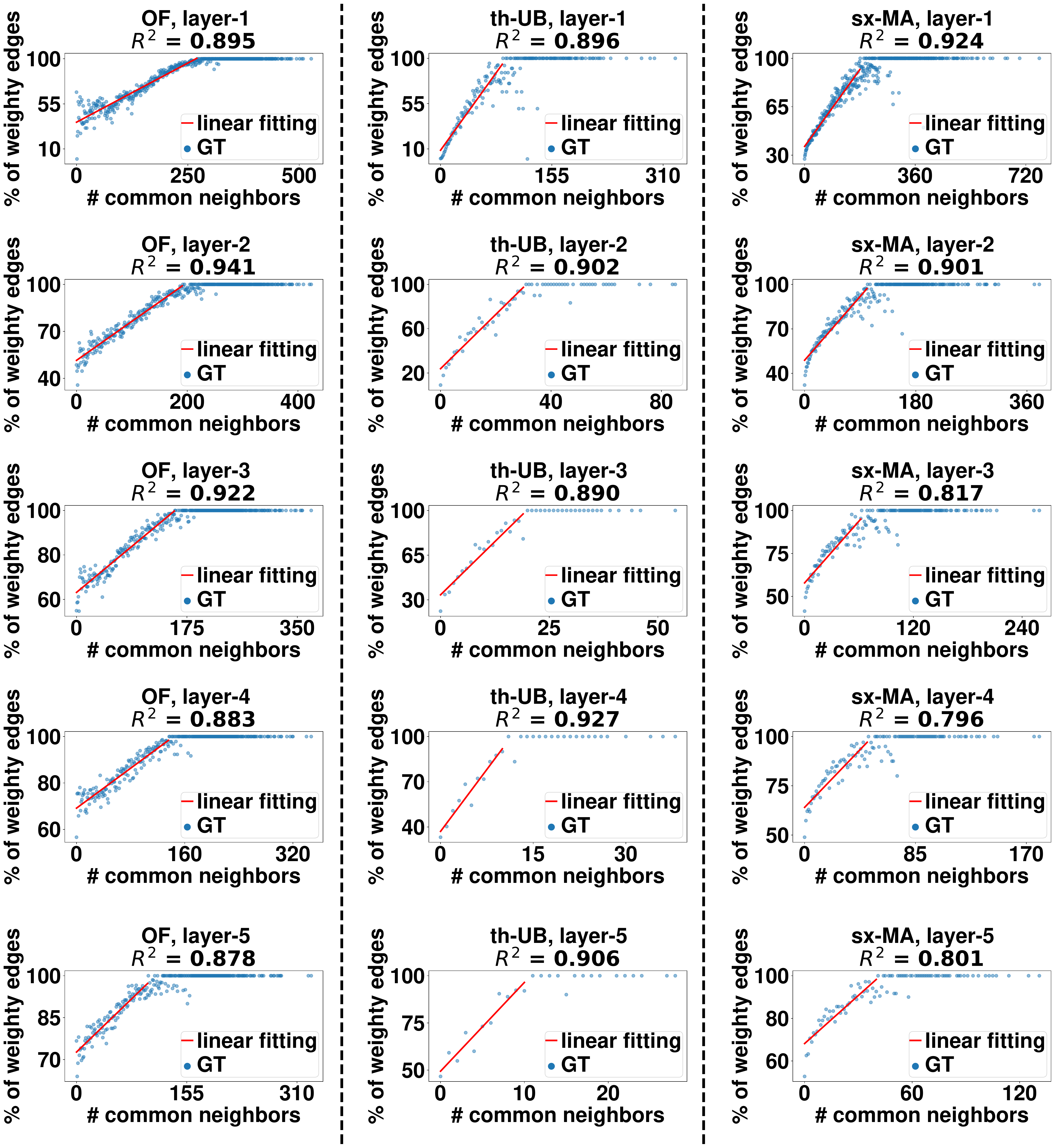}
\caption{\textbf{Fractions of strong edges grow nearly linearly until a saturation point.}
	We also report the $R^2$ of linear fitting before the saturation point.
	The $R^2$ is high in each layer, indicating that the growth is consistently nearly linear.    
	See the supplementary material~\citep{onlineSuppl} for the full results.}
\label{fig:linear-sfs}
\end{figure*}

\subsection{Observation 1: the fractions of weighty edges}\label{subsec:linear-within-layer}
We have shown that the numbers of CNs and the \textit{repetition} (i.e., \textit{weightiness} in layer-$1$) of edges are highly correlated.
We examine this phenomenon in more layers and study the detailed numerical relations between the number $c$ of CNs and the corresponding fraction of weighty edges (FoWE) $f_{c; i}$.
In Figure~\ref{fig:linear-sfs}, for each dataset and each layer-$i$ with $1 \leq i \leq 5$, we plot the FoWEs,
where we can observe that the FoWEs grow \textit{nearly linearly} with the number of CNs until some \textit{saturation point} {(see Definition~\ref{def:saturation_pts_fowes})} such that the FoWEs after the saturation point are almost $100\%$.
In Figure~\ref{fig:linear-sfs}, we also show the results of the linear fitting for the points truncated before the corresponding saturation point with the $R^2$ values, where we can see consistently strong linear correlations.
We formally define the \textit{saturation point} of the fractions of weighty edges (FoWEs)
as the minimum number $c^*$ such that all the edges in $E_{c^*; i}$ are weighty edges.

\begin{definition}[Saturation points of the fraction of weighty edges]\label{def:saturation_pts_fowes}
Given $G$ and $i$, the \textit{saturation point} $c_i^*(G)$ of the fractions of weighty edges is defined as $\min \setpar{c \in \natnum: f_{c; i} = 1}$.
\end{definition}

\begin{remark}
Theoretically, the above definition of saturation point may appear less robust since a single edge that is not weighty can affect the whole group of edges sharing the same number of CNs.
We use such a definition for simplicity and clarity.
In Appendix~\ref{app:saturation_pts}, we discuss this issue and show the practical reasonableness of this definition on the datasets used in our empirical evaluation.
\end{remark}

\color{black}

\begin{observation}[Nearly linear growth of FoWEs]\label{obs:foses_linear_grow}
On each dataset, in each layer, the FoWEs grow nearly linearly with the number of CNs and become almost all $100\%$ after some saturation point.
\end{observation}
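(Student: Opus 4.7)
The plan is to treat Observation~\ref{obs:foses_linear_grow} as an empirical claim to be substantiated by measurement on each of the eleven datasets in Table~\ref{tab:datasets}, rather than as a statement to be derived from axioms. Concretely, for each dataset $G$ and each layer index $i \in \{1,2,3,4,5\}$, I would iterate over $E_i$ once, compute $\vert CN_e(G_i)\vert$ for every $e \in E_i$, and bucket edges by their CN count. From these buckets I obtain, for every $c$ with $E_{c;i} \neq \emptyset$, the estimated FoWE $f_{c;i}=\vert E_{c;i}\cap E_{i+1}\vert/\vert E_{c;i}\vert$. The observation then reduces to two measurable sub-claims: (a) there exists a relatively small $c_i^*(G)$ after which $f_{c;i}$ is essentially $1$, and (b) on the range $c\in\{0,1,\ldots,c_i^*(G)-1\}$ the points $(c,f_{c;i})$ lie close to a single line.

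For sub-claim (a), I would compute $c_i^*(G)=\min\{c\in\natnum:f_{c;i}=1\}$ exactly as in Definition~\ref{def:saturation_pts_fowes}, and separately report summary statistics (its value, the fraction of $E_i$ with $\vert CN_e\vert \geq c_i^*$, and the max $c$ observed) so a reader can see that the saturation indeed covers the bulk of the high-CN tail rather than being artificially forced by a single lucky bucket. The robustness caveat flagged in the Remark would be addressed by cross-referencing Appendix~\ref{app:saturation_pts}; I would additionally verify that $f_{c;i}$ stays above some high threshold (e.g.\ $0.95$) for all $c$ near but below $c_i^*$, so the saturation is genuine rather than an isolated spike.

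For sub-claim (b), on the pre-saturation range I would fit an ordinary least-squares line $f_{c;i}\approx \alpha_i c+\beta_i$ and report $R^2$. The observation is considered supported if $R^2$ is consistently high across all datasets and all five layers, exactly as depicted in Figure~\ref{fig:linear-sfs}. To prevent a handful of high-$c$ buckets with tiny $\vert E_{c;i}\vert$ from dominating either the line or the $R^2$, I would either (i) weight each point by $\vert E_{c;i}\vert$ or (ii) restrict the fit to buckets with $\vert E_{c;i}\vert$ above a small threshold, and check that both variants give qualitatively similar $R^2$.

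The main obstacle will be the interaction between the two sub-claims at the boundary of saturation: the definition of $c_i^*(G)$ is brittle because a single non-weighty edge in a high-CN bucket delays saturation, which in turn extends the linear-fit window and can artificially lower $R^2$. I would mitigate this by (1) justifying the brittle definition via the empirical evidence collected for Appendix~\ref{app:saturation_pts}, (2) reporting $R^2$ together with the truncation point used, and (3) optionally showing that replacing the strict threshold $f_{c;i}=1$ with a relaxed one (say $f_{c;i}\geq 1-\varepsilon$ for small $\varepsilon$) yields nearly identical saturation points and fits. With these checks in place, the table of $R^2$ values embedded in Figure~\ref{fig:linear-sfs}, together with the consistency across eleven datasets from five domains, constitutes the empirical ``proof'' of Observation~\ref{obs:foses_linear_grow}.
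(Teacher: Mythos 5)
Your proposal matches the paper's own substantiation of this observation: the paper establishes it empirically by computing $f_{c;i}$ per CN-bucket in each layer, truncating at the saturation point of Definition~\ref{def:saturation_pts_fowes}, fitting a line and reporting consistently high $R^2$ across all eleven datasets and layers $1$--$5$ (Figure~\ref{fig:linear-sfs}), with the brittleness of the strict $f_{c;i}=1$ threshold handled exactly as you suggest via the relaxed $99\%$ check in Appendix~\ref{app:saturation_pts}. Your additional robustness variants (weighted fits, bucket-size thresholds) go slightly beyond what the paper reports but do not change the approach.
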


\begin{figure*}[t!]
\centering
\includegraphics[width=\linewidth]{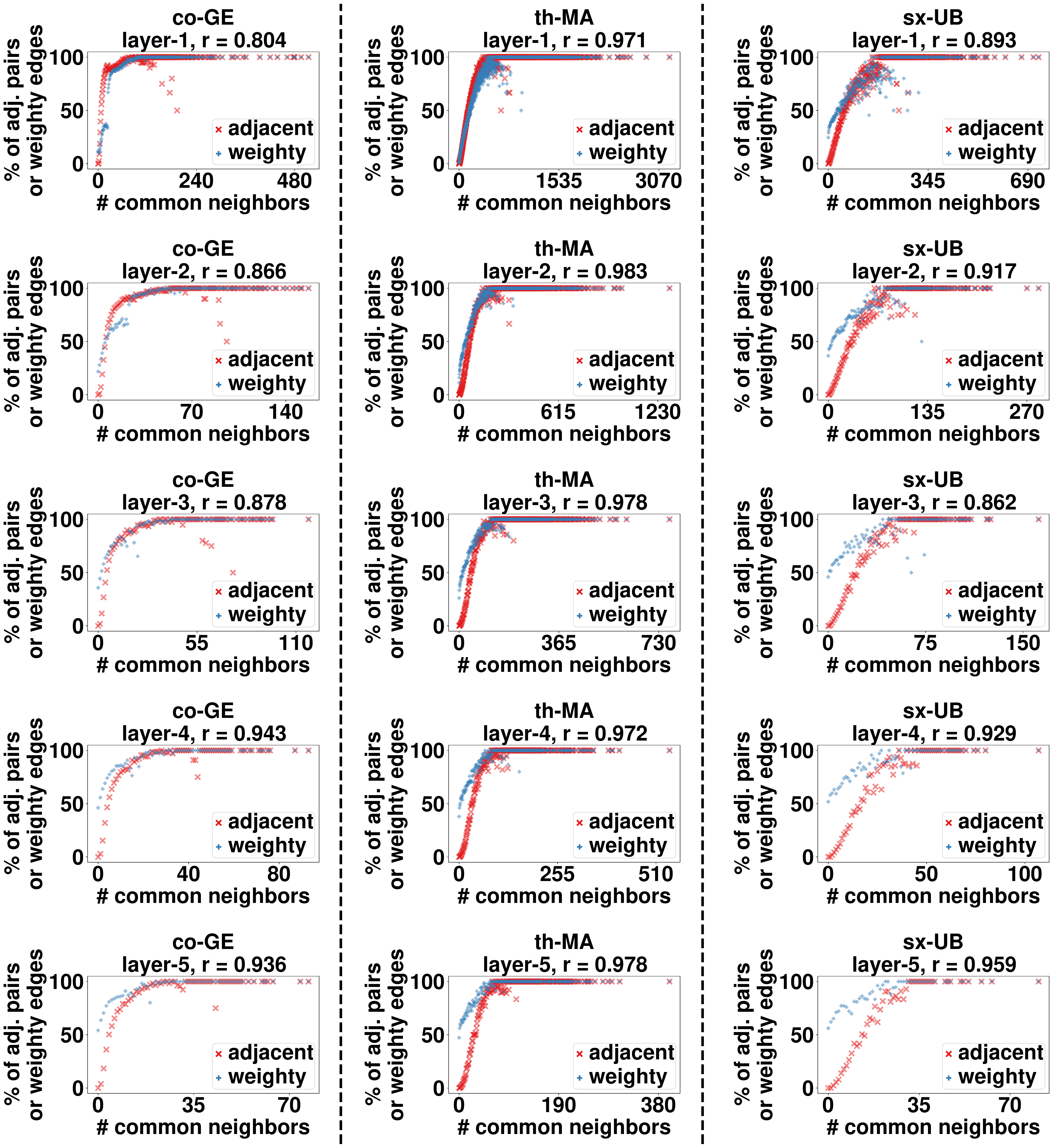}    
\caption{\textbf{The fractions of {(a) adjacent pairs and (b) weighty edges} within each group with the same number of common neighbors are similar.}
	We report Pearson's $r$ between the two fractions, which is high in each layer.    
	{Note that for each dataset, the range of the $x$-axis changes over layers, which is because the maximum number of common neighbors changes over layers. Also, we do not compare the fractions across different layers but only compare them within each layer.}
	The full results are in the supplementary material~\citep{onlineSuppl}.}
\label{fig:con-and-rep}
\end{figure*}

\subsection{Observation 2: adjacency and weightiness}\label{subsec:con-str}
The number of CNs has been widely used for \textit{link prediction} \citep{liu2011link, gunecs2016link}, i.e., inferring the adjacency between node pairs.
In the above Observation~\ref{obs:foses_linear_grow}, we have shown the connection between the number of CNs and the weightiness of edges.
Are the adjacency of pairs and the weightiness of edges also quantitatively related?
In Figure~\ref{fig:con-and-rep}, for each dataset and each layer-$i$ with $1 \leq i \leq 5$, we report how 
(a) the fraction of adjacent pairs within each group of pairs (i.e., $\tilde{f}_{c; i}$) and 
(b) the fraction of weighty edges within each group of edges (i.e., $f_{c; i}$)
depend on the number of CNs, where consistently high Pearson correlation coefficients are observed.
We summarize our observation w.r.t this similarity as follows.

As we have mentioned, the information of CNs has been used for link prediction ~\citep{wang2015link} and for indicating the significance of individual edges~\citep{ahmad2020missing,cao2015grarep,zhu2016link},
where the assumption is usually qualitative, e.g., node pairs between two nodes sharing more CNs are more likely to be adjacent (or more important).
However, no existing works study the \textit{quantitative} relation between the adjacency of pairs and the weightiness (repetition) of edges w.r.t the number of CNs in a unified way and compare them with each other.

\begin{table}[t!]
\begin{center}
	\caption{\textbf{The saturation point of the fractions of weighty edges and that of the fractions of adjacent pairs are consistently close.}
		For each layer of each dataset, we report the saturation point of the fractions of adjacent pairs on the left and that of the fractions of weighty edges on the right.}
	\label{tab:sf-con-str}
	\resizebox{0.6\linewidth}{!}{%
		\begin{tabular}{ lccccc }
			\toprule
			\textbf{dataset} & layer-$1$ & layer-$2$ & layer-$3$ & layer-$4$ & layer-$5$ \\
			\midrule
			OF              & 241/271 & 190/192 & 157/156 & 134/137 & 119/101 \\ 
			\midrule
			FL          & 64/66 & 31/31 & 17/17 & - & - \\
			\midrule
			th-UB       & 73/87 & 30/31 & 19/20 & 18/11 & 15/11 \\
			th-MA       & 372/401 & 145/153 & 114/114 & 84/67 & 63/59 \\
			th-SO       & 685/750 & 208/205 & 134/129 & 97/82 & 74/72 \\
			\midrule
			sx-UB       & 152/149 & 63/69 & 48/42 & 36/27 & 31/22 \\
			sx-MA       & 185/181 & 113/102 & 75/63 & 60/49 & 51/41 \\
			sx-SO       & 886/749 & 407/324 & 221/203 & 169/130 & 120/103 \\
			sx-SU       & 202/206 & 96/93 & 63/54 & 48/37 & 36/27 \\
			\midrule
			co-DB       & 83/88 & 36/29 & 22/24 & 20/21 & 16/16 \\
			co-GE       & 74/92 & 52/49 & 34/40 & 28/30 & 24/21 \\
			\bottomrule %
		\end{tabular}
	}
\end{center}
\end{table}

Similar to the saturation point of FoWEs,
we also defined the saturation point of the fractions of adjacent pairs (FoAPs)
as the minimum number $c^*$ such that all the pairs in $R_{c^*; i}$ are adjacent pairs.
\begin{definition}[Saturation points of the fractions of adjacent pairs]\label{def:saturation_pts}
Given $G$ and $i$, 
recall that the \textit{saturation point} $c_i^*(G)$ of the fractions of weighty edges is defined as $\min \setpar{c \in \natnum: f_{c; i} = 1}$, 
the \textit{saturation point} $\tilde{c}_i^*(G)$ of the fractions of adjacent pairs is defined as $\min \setpar{c \in \natnum: R_{c; i} = E_{c; i}}$.
\end{definition}
As shown in Table~\ref{tab:sf-con-str}, we observe that the saturation point of the FoWEs is consistently similar to that of the FoAPs (see Figure~\ref{fig:con-and-rep}).
\begin{observation}[Similarity between pair-adjacency and edge-weightiness]\label{obs:adj_and_str}
On each dataset, in each layer, the trends of the fractions of adjacent pairs and the fractions of weighty edges w.r.t the number of CNs have a high correlation {(see the consistently high Pearson's $r$ values)},\footnote{{We are studying the correlations here, and the absolute differences are not necessarily small.}} and the saturation point of the FoWEs is close to that of the FoAPs.
\end{observation}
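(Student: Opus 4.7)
The plan is to establish Observation~\ref{obs:adj_and_str} empirically across the eleven datasets and the first several layers of each, since the statement is an aggregation of two data-driven claims (high Pearson correlation between the two trend sequences, and closeness of two saturation points) rather than a formal theorem. For each dataset $G$ and each layer index $i \in \{1, \ldots, 5\}$, I would first materialize the layer $G_i$ by keeping edges with weight at least $i$, and then, for every integer $c \geq 0$, enumerate the two groups $E_{c;i}$ (edges of $G_i$ whose endpoints share exactly $c$ CNs in $G_i$) and $R_{c;i}$ (all node pairs in $V_i$ sharing exactly $c$ CNs in $G_i$). From these I directly obtain the sequences $\{f_{c;i}\}_c$ and $\{\tilde{f}_{c;i}\}_c$ via Definition~\ref{def:SEs_and_FSEs}.

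For the first half of the claim, I would compute Pearson's $r$ between $\{f_{c;i}\}$ and $\{\tilde{f}_{c;i}\}$, restricted to values of $c$ for which both $E_{c;i}$ and $R_{c;i}$ are nonempty, and verify that these coefficients are consistently large in every (dataset, layer) pair, as visualized in Figure~\ref{fig:con-and-rep}. For the second half, I would compute $c_i^*$ and $\tilde{c}_i^*$ following Definitions~\ref{def:saturation_pts_fowes} and~\ref{def:saturation_pts}, and then report them side by side in a table analogous to Table~\ref{tab:sf-con-str}, checking that the two numbers are close for every (dataset, layer) pair.

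The main obstacle is the computational cost of enumerating $R_{c;i}$, which naively requires examining all $\binom{|V_i|}{2}$ node pairs and intersecting their neighborhoods, a potentially prohibitive cost on the large datasets (e.g.\ th-SO and sx-SO). To make this feasible I would enumerate only pairs that share at least one CN by, for each node $x \in V_i$, iterating over all unordered pairs of its neighbors in $G_i$ and aggregating counts in a hash map keyed by pair; pairs with $c = 0$ can be handled separately via the identity $|R_{0;i}| = \binom{|V_i|}{2} - \sum_{c \geq 1} |R_{c;i}|$, without explicit enumeration. A secondary subtlety is that the saturation points in Definitions~\ref{def:saturation_pts_fowes} and~\ref{def:saturation_pts} are sensitive to any single non-weighty edge or non-adjacent pair at a large $c$; I would address this by checking, as foreshadowed in the remark after Definition~\ref{def:saturation_pts_fowes} and in Appendix~\ref{app:saturation_pts}, that $f_{c;i}$ and $\tilde{f}_{c;i}$ indeed reach and stay near $1$ for all $c$ in a neighborhood of the reported saturation point, so that the reported values are not artifacts of a single outlier edge or pair.
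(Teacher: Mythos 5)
Your proposal is correct and matches the paper's own justification: the observation is an empirical claim, and the paper supports it exactly as you describe, by reporting per-layer Pearson's $r$ between $\{f_{c;i}\}$ and $\{\tilde{f}_{c;i}\}$ (Figure~\ref{fig:con-and-rep}) and by tabulating $c_i^*$ alongside $\tilde{c}_i^*$ for every dataset and layer (Table~\ref{tab:sf-con-str}), with the saturation-point robustness concern deferred to Appendix~\ref{app:saturation_pts} just as you anticipate. The efficient enumeration of $R_{c;i}$ via neighbor pairs and the separate handling of $c=0$ are sensible implementation details consistent with what the paper must be doing.
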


\subsection{Observation 3: a power law across layers}\label{subsec:powerlaw-sf0-overall}
The previous observations describe some patterns within each layer.
Is there any pattern that connects different layers?
For each layer-$i$, we collect the information of $f_{0; i}$ (FoWE of the group of edges without CNs in layer-$i$) and $f_{overall; i}$ (the overall FoWE of all the edges in layer-$i$).
By doing so, we obtain two sequences ($f_{0; i}$'s and $f_{overall; i}$'s) across different layers.
We observe a consistent and strong power law between the two sequences, which we visualize in Figure~\ref{fig:p0-p-powerlaw}.
In the figure, for each dataset, we plot (a) the point $(f_{overall; i}, f_{0; i})$ for each $1 \leq i \leq 10$ in the log-log scale and (b) the power-law fitting line,\footnote{We only include the first four layers of \textit{FL} since the layer-$5$ is too sparse and small.} which is linear in the log-log scale. The consistent and strong power law is clearly observed.
\begin{observation}[A power law across layers]\label{obs:fose_power_law}
On each dataset, across the layers, the FoWEs of the group of edges sharing no CNs and the overall FoWEs of all edges follow a strong power law.
\end{observation}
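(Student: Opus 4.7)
The plan is to empirically establish Observation~\ref{obs:fose_power_law} by computing, for each of the eleven datasets and each layer up to some cutoff, the pair $(f_{overall;i}, f_{0;i})$, and then fitting a power law in log-log space. This is an empirical claim about real-world graphs rather than a deductive statement, so a ``proof'' here means a reproducible protocol that exhibits a consistent linear fit in log-log scale across all datasets, together with quantitative fit-quality statistics.

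First I would, for each dataset $G$ and each $i \in \setpar{1, 2, \ldots, i_{\max}}$, construct the layer $G_i$, enumerate $E_i$, and then identify within $G_i$ the subset $E_{0;i} \subseteq E_i$ of edges whose endpoints share no common neighbor \emph{inside $G_i$} (not inside the original $G$, a subtlety that matters because CN counts are defined relative to the layer's topology in Definition~\ref{def:SEs_and_FSEs}). From these I would compute
\begin{equation}
f_{overall;i} = \frac{|E_{i+1}|}{|E_i|}, \qquad f_{0;i} = \frac{|E_{0;i} \cap E_{i+1}|}{|E_{0;i}|}.
\end{equation}
Then I would form the point cloud $\setpar{(\log f_{overall;i}, \log f_{0;i}) : 1 \leq i \leq i_{\max}}$ for each dataset and fit a line by ordinary least squares,
\begin{equation}
\log f_{0;i} = \alpha \cdot \log f_{overall;i} + \beta,
\end{equation}
reporting $\alpha$, $\beta$, and the coefficient of determination $R^2$. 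A strong power law is evidenced by $R^2$ close to $1$ together with visual concordance as in Figure~\ref{fig:p0-p-powerlaw}.

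The choice of $i_{\max}$ must be handled with care. For small or fast-decaying graphs such as \textit{FL}, deep layers become so sparse that $E_{0;i}$ shrinks to a handful of edges and $f_{0;i}$ becomes extremely noisy (or undefined if $E_{0;i}=\emptyset$); I would therefore truncate each dataset to the largest $i_{\max}$ for which both $|E_i|$ and $|E_{0;i}|$ remain above a modest threshold, matching the paper's truncation of \textit{FL} at layer-$4$ and its global cutoff of $10$ elsewhere.

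The main obstacle will be ruling out trivial or coincidental fits. If the CN-conditional weightiness were flat in $c$, we would have $f_{0;i} \approx f_{overall;i}$ and the fitted line would collapse to the identity ($\alpha \approx 1$, $\beta \approx 0$), making the power law uninformative; so I would record $\alpha$ explicitly and confirm that it differs meaningfully from $1$, reflecting---consistently with Observation~\ref{obs:foses_linear_grow}---that the zero-CN group is depleted of weighty edges \emph{faster} than the overall average as $i$ grows. I would also test robustness by varying $i_{\max}$ and by checking that the fit does not collapse when the noisiest (smallest-$|E_{0;i}|$) layers are removed. Since there is no closed-form mechanism forcing a power law between $f_{overall;i}$ and $f_{0;i}$, the substantive content of the observation lies precisely in this empirical stability across all eleven datasets, and the protocol above is what Figure~\ref{fig:p0-p-powerlaw} is designed to demonstrate.
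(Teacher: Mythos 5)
Your protocol is essentially identical to what the paper does to support this observation: compute $(f_{overall;i}, f_{0;i})$ per layer with CNs counted inside each layer $G_i$, fit a line in log-log space, report $R^2$ close to $1$ across all eleven datasets, and truncate sparse deep layers (e.g., \textit{FL} beyond layer-$4$, and a global cutoff of $10$ layers). The extra robustness checks you propose (verifying the exponent differs from $1$, varying $i_{\max}$) go slightly beyond what the paper reports but do not change the approach.
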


{
\begin{remark}\label{rem:general_edge_weights}
	All the above observations are based on layer structures. For weighted graphs with positive-integer edge weights, decomposing such graphs into layers is straightforward, while for weighted graphs with real-valued edge weights, we can convert the edge weights into integers by rounding or other ways.
	Moreover, in the datasets used in this work, the edge weights represent the number of occurrences.
	Therefore, the observations may not hold on weighted graphs where the edge weights have other real-world meanings.
	See Appendix~\ref{app:general_edge_weights} for more discussions.
\end{remark}
}
\begin{figure*}[t!]
\centering
\includegraphics[width=\linewidth]{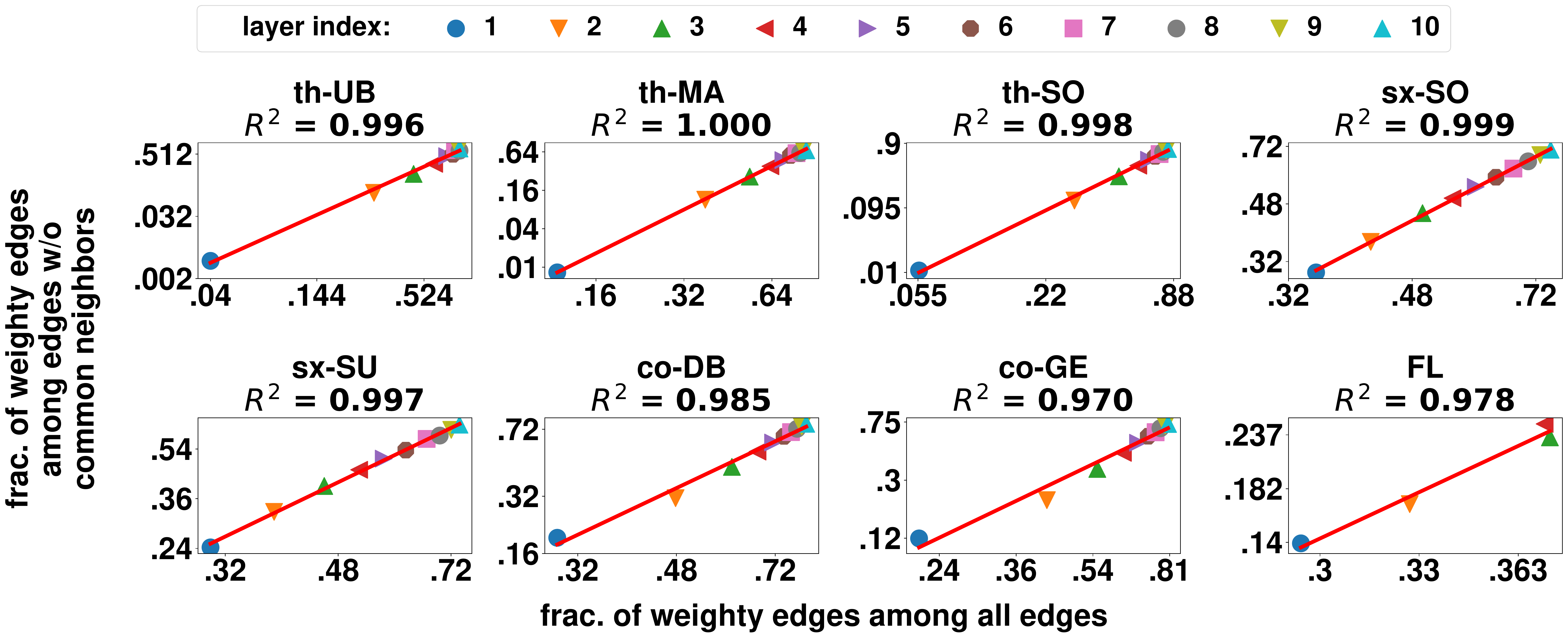}
\caption{\textbf{Consistent and strong power laws exists between the fractions of weighty edges of (a) those without common neighbors and (b) those of all edges.}
	We also report the $R^2$ of power-law fitting.
	The $R^2$ is consistently close to $1$ for all datasets, indicating consistently strong power laws.
	See the supplementary material~\citep{onlineSuppl} for the full results.
}
\label{fig:p0-p-powerlaw}
\end{figure*}

\section{Proposed algorithm: \ours}\label{sec:model}

In this section, we propose \ours (\bus{P}attern-based \bus{E}dge-weight \bus{A}ssignment on g\bus{R}aphs), an algorithm with only \textit{two} parameters that assigns weights to a given topology. 
\ours produces the edge weights layer by layer based on the above observations.
Below, we shall describe the mathematical formulation of our observations and then the detailed procedure of \ours.

\subsection{Formulation of the observations}\label{subsec:math_formulation}
In Section~\ref{sec:patterns}, we describe the patterns that we have observed on the real-world datasets.
For constructing an algorithm, we shall first mathematically formulate the observations.
In the formulation, we may idealize the observations into simple, intuitive, and deterministic formulae.

\smallsection{Linear growth of FoWEs.}
In Observation~\ref{obs:foses_linear_grow}, we have mentioned that the fractions of weighty edges (FoWEs) grow \textit{nearly linearly} with the number of CNs with some saturation points consistently over all datasets and all layers. 
In our algorithm \ours, we assume \textit{perfect linearity}, and we formulate this phenomenon as follows:
\begin{align}\label{eq:foses_linear}
	f_{c; i}(G) = \min(1, f_{0;i}(G) + (1 - f_{0;i}(G)) \frac{c}{c_i^*(G)}), \forall G, i, c,
\end{align}
where $c_i^*(G)$ is the saturation point of FoWEs in $G_i$.
By this formulation, $f_{c; i} = f_{0; i}$ when $c = 0$, and it increases with the number of CNs in a ratio of $(1 - f_{0;i}) / c_i^*$ until $f_{c; i} = 1$ when $c$ reaches $c_i^*$ and stays with $f_{c; i} = 1$ thereafter.

\algnewcommand\algorithmicforeach{\textbf{for each}}
\algdef{S}[FOR]{ForEach}[1]{\algorithmicforeach\ #1\ \algorithmicdo}

\begin{algorithm}[t!]
	\caption{\ours for edge-weight assignment\label{alg:weight-assign}}
	\hspace*{\algorithmicindent} \textbf{Input:} (1) $\overbar{G} = (V, E)$, (2) power-law parameters $a$ and $k$ \\
	\hspace*{\algorithmicindent} \textbf{Output:} weight assignment $W$
	\begin{algorithmic}[1]
		\State $G_1 \leftarrow \overbar{G}$ \label{line:init_1}
		\State $W_e \leftarrow 1, \forall e \in E$ \label{line:init_2}
		\For{$i = 1, 2, 3, \ldots$}
		\If{$\tilde{c}_i^*$ does not exist or $\sum_{c < \tilde{c}_i^*} \vert E_{c; i}\vert = 0$} \label{line:break_condition}
		\State Break \label{line:break}
		\EndIf
		\State $\vert E_{i+1}\vert  \leftarrow$ the solution in $(0, \vert E_i\vert )$ of Equation~\eqref{eq:solve_se} (Property~\eqref{eq:foses_powerlaw}) \label{line:solve_se}
		\State $f_{0; i} \leftarrow a (\vert E_{i+1}\vert  / \vert E_i\vert)^k$ (Eq.~\eqref{eq:p0-se}) \label{line:compute_f0i}
		\State $c_i^* \leftarrow \Tilde{c}_i^*$ (Property~\eqref{eq:sp_close}) \label{line:approx_sp}
		\State $C_i \leftarrow \setpar{\vert CN_e \vert: e \in E_{i}}$ \label{line:set_num_CNs}
		\State $E_c \leftarrow \setpar{e \in E_i: \vert CN_e \vert = c}, \forall c \in C_i$
		\State $f_{c; i} \leftarrow \min(1, f_{0;i} + (1 - f_{0;i}) {c}/{c_i^*}), \forall c \in C_i$ (Property~\eqref{eq:foses_linear}) \label{line:compute_fci}
		\State $E_{i+1} \leftarrow \emptyset$ \label{line:set_se_init}
		\ForEach{$c \in C_i$} \label{line:sample_s}
		\State $\hat{E}_{c} \leftarrow$ sample $f_{c; i} \vert E_c \vert$ (rounded) edges in $E_c$ uniformly at random \label{line:sample}
		\State $E_{i + 1} \leftarrow E_{i + 1} \cup \hat{E}_c$ \label{line:sample_add}
		\EndFor \label{line:sample_e}
		\State $W_e \leftarrow i + 1, \forall e \in E_{i+1}$ \label{line:next_layer_weight_assign}
		\State $G_{i + 1} \leftarrow (\bigcup_{e \in E_{i + 1}}e, E_{i + 1})$ \label{line:next_layer_construct}
		\EndFor
		\State \Return{$W$} \label{line:return}
	\end{algorithmic}
\end{algorithm}

\smallsection{Saturation points.}
In Observations~\ref{obs:adj_and_str} and \ref{obs:foses_linear_grow}, we have mentioned the \textit{similarity} between the trends of the fraction of adjacent pairs and the FoWEs and we have further pointed out that the saturation points of the two kinds of fractions are close.
In the algorithm, we assume \textit{equality} between $\tilde{c}_i^*$ and $c_i^*$, and we formulate it as follows:
\begin{align}\label{eq:sp_close}
	c_i^*(G) = \tilde{c}_i^*(G), \forall G, i,
\end{align}
where recall that 
$\tilde{c}_i^*(G)= \min \setpar{c: R_{c; i}(G) = E_{c; i}(G)}$
and $R_{c; i}(G)$ is the set of node pairs sharing $c$ CNs in $G_i$.

\smallsection{The power law across layers.}
In Observation~\ref{obs:fose_power_law}, we have mentioned that \textit{the FoWEs of the group of edges sharing no CNs and the overall FoWEs of all edges follow a strong power law}.
In the algorithm, assuming a \textit{perfect} power law, we formulate it as follows:
\begin{align}\label{eq:foses_powerlaw}
	f_{0;i}(G) = a(G) f_{overall; i}^{k(G)}, \forall G, i,
\end{align}
where $a(G)$ and $k(G)$ are the two parameters of the power law which may vary for each different graph $G$.

With Properties~\eqref{eq:foses_linear}-\eqref{eq:foses_powerlaw} formulated and explicitly described, we are now ready to re-state Problem~\ref{pro:assign_weights} in a more formal and specific way.
\begin{problem}[Formal]\label{pro:assign_weights_formal}
	Given an unweighted graph $\overbar{G} = (V, E)$, 
	we aim to generate edge weights $W: E \rightarrow \natnum$ that satisfy Properties~\eqref{eq:foses_linear}-\eqref{eq:foses_powerlaw}.
\end{problem}

\subsection{Algorithmic details}
Now we are ready to describe the algorithmic details of \ours (Algorithm~\ref{alg:weight-assign}), which combines all the above formulae {to produce the edge weights layer by layer.}
From now on, we suppose that $G$ is given as an input and thus fixed.
{By Equation~\eqref{eq:foses_powerlaw}} and the definition of $f_{overall; i} = \vert E_{i + 1} / E_i \vert$, we have
\begin{align}\label{eq:p0-se}
	f_{0; i} = a (\vert E_{i+1} \vert / \vert E_i \vert)^k, \forall i.
\end{align}
For each layer-$i$, the total number $\vert E_{i+1}\vert $ of weighty edges should be equal to the summation of the numbers of weighty edges in all the $E_{c; i}$'s.
By Equations~\eqref{eq:foses_linear} and {\eqref{eq:sp_close}}, it gives
\begin{align}\label{eq:se_counting}
	\vert E_{i+1}\vert 
	=& \sum_c \vert E_{c;i}\vert  f_{c;i} 
	= \sum_c \vert E_{c;i}\vert  \min(1, f_{0;i} + (1 - f_{0;i}) {c} / {c_i^*}) \nonumber \\
	=& \sum_c \vert E_{c;i}\vert  \min(1, f_{0;i} + (1 - f_{0;i}) {c} / {\tilde{c}_i^*}).
\end{align}
Note that $\tilde{c}_i^*$ can be obtained from the given topology when $i = 1$ or from the currently generated layers when $i > 1$ (Line~\ref{line:approx_sp}).
We expand Equation~\eqref{eq:se_counting} to get
\begin{align}\label{eq:se_counting_rephrase}
	\vert E_{i + 1}\vert  
	% =& \sum_c \vert E_{c;i}\vert  \min(1, f_{0;i} + (1 - f_{0;i}) {c} / {\tilde{c}_i^*}) \nonumber \\
	=& \sum_{c \leq \tilde{c}_i^*} \vert E_{c;i}\vert  (f_{0;i} + (1 - f_{0;i}) {c} / {\tilde{c}_i^*}) + \sum_{c > \tilde{c}_i^*} \vert E_{c;i}\vert  \nonumber \\
	=& f_{0;i} \sum_{c \leq \tilde{c}_i^*} \vert E_{c;i}\vert  (\tilde{c}_i^* - c)/{\tilde{c}_i^*} + \sum_c \vert E_{c;i}\vert  \min(1, {c}/{\tilde{c}_i^*}),
\end{align}
which further gives the relation between $f_{0;i}$ and $\tilde{c}_i^*$: 
\begin{align}\label{eq:p0-sp}
	f_{0;i}
	= \frac{\vert E_{i+1} \vert - \sum_c \vert E_{c;i}\vert  \min(1, {c}/{\tilde{c}_i^*})}{\sum_{c \leq \tilde{c}_i^*} \vert E_{c;i}\vert  (\tilde{c}_i^* - c)/{\tilde{c}_i^*}}.
\end{align}
Equations~\eqref{eq:p0-se} and \eqref{eq:p0-sp} give us two different expressions of $f_{0; i}$ and we can use them to obtain $\vert E_{i+1}\vert $ by solving the following equation:
\begin{align}\label{eq:solve_se}
	a (\vert E_{i+1}\vert  / \vert E_i\vert )^k =
	\frac{\vert E_{i+1}\vert   - \sum_c \vert E_{c;i}\vert  \min(1, {c}/{\tilde{c}_i^*})}{\sum_{c < \tilde{c}_i^*} \vert E_{c;i}\vert  (\tilde{c}_i^* - c){\tilde{c}_i^*}}.
	% ({\vert E_{i+1}\vert   - \sum_c \vert E_{c;i}\vert  \min(1, \frac{c}{c_i^*})}) / {\sum_{c \leq c_i^*} \vert E_{c;i}\vert  \frac{c_i^* - c}{c_i^*}}.
\end{align}
\begin{remark}
	In Equation~\eqref{eq:solve_se}, the denominator is zero only when $E_{c; i} = \emptyset$ for each $c \leq \tilde{c}_i^*$, which implies that all the edges are strong edges and the generated edge weights are not meaningful.
\end{remark}
\color{black}
\begin{theorem}\label{thm:uniq_sol}
	Assume that $\tilde{c}_i^* > 0$ exists and $\sum_{c < \tilde{c}_i^*} \vert E_{c;i}\vert  > 0$.
	If $a < 1$ and $k > 1$, then in the range $(0, \vert E_{i}\vert )$, Equation~\eqref{eq:solve_se} has a unique solution $\vert E_{i+1}\vert  \in (0, \vert E_{i}\vert )$.
\end{theorem}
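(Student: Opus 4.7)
The plan is to reduce Equation~\eqref{eq:solve_se} to $f(x)=0$ on $(0,\vert E_i\vert)$, where $x := \vert E_{i+1}\vert$, and to analyze $f$ using its boundary values together with strict concavity. Let $M := \vert E_i\vert$, $A := \sum_c \vert E_{c;i}\vert \min(1, c/\tilde{c}_i^*)$, and $B := \sum_{c < \tilde{c}_i^*} \vert E_{c;i}\vert(\tilde{c}_i^*-c)/\tilde{c}_i^*$. The hypotheses $\tilde{c}_i^*>0$ and $\sum_{c < \tilde{c}_i^*}\vert E_{c;i}\vert>0$ guarantee $B>0$, so Equation~\eqref{eq:solve_se} is equivalent to $f(x)=0$, where
\[f(x) := \frac{x-A}{B} - a\!\left(\frac{x}{M}\right)^{\!k}.\]

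First, I will establish the identity $A+B=M$ by splitting $M=\sum_c\vert E_{c;i}\vert$ according to $c<\tilde{c}_i^*$ versus $c\geq\tilde{c}_i^*$ and matching each part against the definitions of $A$ and $B$. This identity is the computational backbone of the boundary-value analysis: I can then evaluate $f(M) = (M-A)/B - a = 1-a > 0$ (using $a<1$) and $f(0) = -A/B \leq 0$ (using $A\geq 0$). Continuity of $f$ on $[0,M]$ together with the intermediate value theorem therefore yields at least one zero in $[0,M]$.

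Next, I will show that $f$ is strictly concave on $(0,\infty)$. A direct calculation gives $f''(x) = -ak(k-1)x^{k-2}/M^k$, which is strictly negative for every $x>0$ under the assumptions $0<a<1$ and $k>1$ (the positivity of $a$ being implicit in its role as a power-law coefficient). A strictly concave function has at most two zeros, and is positive strictly between them while negative outside; hence if two zeros both lay in $(0,M)$, we would have $f(M)<0$, contradicting $f(M)=1-a>0$. Therefore at most one zero of $f$ can lie in $(0,M)$.

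The main subtlety, which is really the only nontrivial step, is turning existence in the closed interval $[0,M]$ into existence strictly inside $(0,M)$. In the generic case $A>0$ (which holds whenever $G_i$ contains at least one edge with a common neighbor), one has $f(0)<0<f(M)$, so by the intermediate value theorem the zero lies in $(0,M)$; combined with the previous uniqueness argument this gives both existence and uniqueness in $(0,\vert E_i\vert)$ as claimed. As a sanity check, I would also verify $f'(0^+)=1/B>0$, confirming that $f$ rises from a negative value and crosses zero exactly once before its unique maximum, consistent with the strictly concave geometry.
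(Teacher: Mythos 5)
Your proof is correct and follows essentially the same route as the paper's: the intermediate value theorem applied to the boundary values at $0$ and $\vert E_i\vert$ for existence, and strict convexity/concavity of the power-law term for uniqueness (your ``at most two zeros, positive strictly between them'' argument is equivalent to the paper's convex-combination contradiction). You are in fact slightly more careful than the paper on two points it leaves implicit: the identity $A+B=\vert E_i\vert$ underlying the evaluation $f(\vert E_i\vert)=1-a$, and the fact that strict negativity of $f(0)=-A/B$ (hence a root strictly inside $(0,\vert E_i\vert)$) requires $A>0$, i.e., at least one edge of $G_i$ with a common neighbor.
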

\begin{proof}
	Define 
	\[
	g(x) = a(\frac{x}{\vert E_i\vert})^k - \frac{x  - \sum_c \vert E_{c;i}\vert  \min(1, {c}/{\tilde{c}_i^*})}{\sum_{c \leq \tilde{c}_i^*} \vert E_{c;i}\vert  (\tilde{c}_i^* - c){\tilde{c}_i^*}}
	\] on $x \in [0, \vert E_{i}\vert ]$.
	We have 
	\[
	g(0) = \frac{\sum_c \vert E_{c;i}\vert  \min(1, {c}/{\tilde{c}_i^*})}{\sum_{c \leq \tilde{c}_i^*} \vert E_{c;i}\vert  (\tilde{c}_i^* - c){\tilde{c}_i^*}} > 0
	\]
	and $g(\vert E_i\vert ) = a - 1 < 0$.
	Since $f$ is continuous, {by the intermediate value theorem (see also Bolzano's theorem),} we have at least one solution in the range $(0, \vert E_i\vert )$.
	For the uniqueness, we have 
	% $f'(x) = ak x^{k-1} / \vert E_i\vert ^k - 1 / \sum_{c \leq \tilde{c}_i^*}$, and 
	\[
	f''(x) = ak(k-1) x^{k-2} / \vert E_i\vert ^k > 0, \forall x \in (0, \vert E_i \vert)
	\]
	(i.e., $f$ is convex on $(0, \vert E_i\vert )$).
	% Specifically, $f''(x) > 0, \forall x \in (0, \vert E_i\vert )$ 
	Assume that we have two roots $0 < x_1 < x_2 < \vert E_i\vert $, let $t = (\vert E_i\vert  - x_2) / (\vert E_i\vert  - x_1) \in (0, 1)$, then by the convexity of $f$ we have 
	\[
	0 = g(x_2) = g(tx_1+(1-t)\vert E_i\vert ) \leq tg(x_1)+(1-t)g(\vert E_i\vert )=(1-t)g(\vert E_i\vert )<0,
	\]
	completing the proof by contradiction.
\end{proof}
After obtaining $\vert E_{i+1}\vert $ (Line~\ref{line:solve_se}), we use Equation~\eqref{eq:p0-se} to compute $f_{0; i}$ (Line~\ref{line:compute_f0i}),
and then use Equation~\eqref{eq:foses_linear} with $c_i^* = \tilde{c}_i^*$ 
to compute all $f_{c;i}$'s {(Lines~\ref{line:approx_sp} and \ref{line:compute_fci}).}
Then for each $c$, we sample $f_{c; i} \vert E_c \vert$ edges uniformly at random in $E_c$ in the current layer $G_i$ to be weighty edges,
assign the edge weights accordingly, and construct the next layer $G_{i+1}$ (Lines~\ref{line:set_se_init}-\ref{line:next_layer_construct}).
We repeat the process layer after layer.
By Theorem~\ref{thm:uniq_sol}, if we have a valid saturation point $\tilde{c}_i^*$ and there exist edges sharing less than $\tilde{c}_i^*$ CNs, we can always obtain a unique solution of $\vert E_{i+1}\vert $ from Equation~\eqref{eq:solve_se}, and thus the whole process can continue.
{If any of the conditions are not met, the process terminates, and the current edge weights are returned as the final output (Lines~\ref{line:break_condition} and {\ref{line:return})}.}

The following theorem shows the time complexity of Algorithm~\ref{alg:weight-assign}.
\begin{theorem}\label{thm:pear_time}
	Given an input graph $\overbar{G} = (V,E)$ and two parameters $a$ and $k$, Algorithm~\ref{alg:weight-assign} takes $O(i_{max} \sum_{v \in V} d_v^2))$ time to output a weight assignment $W$, where $i_{max}$ is the maximum layer index such that $G_{i_{max}}$ is non-empty, i.e., the maximum weight in the output.
\end{theorem}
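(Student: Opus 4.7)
The plan is to bound the work of a single iteration of the outer for-loop by $O(\sum_{v \in V} d_v^2)$ and multiply by the at most $i_{max}$ iterations. Within iteration $i$, the nontrivial tasks are (a) computing the common-neighbor counts needed for every edge in $E_i$ and for enough pairs of nodes to determine the saturation point $\tilde{c}_i^*$; (b) solving Equation~\eqref{eq:solve_se} for $|E_{i+1}|$; (c) computing $f_{0;i}$ and each $f_{c;i}$; and (d) sampling edges for $E_{i+1}$, assigning weights, and constructing $G_{i+1}$. I will argue that (a) dominates and the other tasks fit within the same budget.

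For (a), the key observation is that $\sum_{v \in V} d_v^2$ is (up to constants) the number of wedges. Iterating over each node $v \in V_i$ and each pair of its neighbors in $G_i$, and incrementing a hash-table counter indexed by the unordered pair, enumerates all wedges; aggregating over $v$ yields $|CN_{uw}(G_i)|$ for every pair $(u,w)$ sharing at least one CN, in $O(\sum_v d_v(G_i)^2) = O(\sum_v d_v(\overbar{G})^2)$ time since $G_i \subseteq \overbar{G}$. Bucketing these at most $O(\sum_v d_v^2)$ records by CN value gives $|R_{c;i}|$ and $|E_{c;i}|$ for every $c \geq 1$; the counts for $c = 0$ are recovered by subtracting from $\binom{|V_i|}{2}$ and $|E_i|$, respectively. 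A single scan of the buckets then locates $\tilde{c}_i^* = \min\setpar{c : R_{c;i} = E_{c;i}}$.

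For (b), Equation~\eqref{eq:solve_se} is one-dimensional in $|E_{i+1}| \in (0, |E_i|)$; by Theorem~\ref{thm:uniq_sol} its root is unique under the stated assumptions, and can be located by bisection in $O(\log |E_i|)$ evaluations, each a sum with at most $|E_i|$ distinct CN terms, hence $O(|E_i|)$. For (c), computing every $f_{c;i}$ is one pass over the CN-value buckets, $O(|E_i|)$. For (d), drawing a uniform random subset of each $E_c$ (e.g., via partial Fisher--Yates) and assembling $G_{i+1}$ touches each edge in $E_i$ only $O(1)$ times, again $O(|E_i|)$. All these are dominated by the $O(\sum_v d_v^2)$ bound from (a).

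The main obstacle is making task (a) truly $O(\sum_v d_v^2)$ despite $\tilde{c}_i^*$ depending on the pair counts $|R_{c;i}|$ rather than only edge counts, where a naive approach could cost $\Theta(|V|^2)$. The wedge-enumeration trick circumvents this: each pair $(u,w)$ with $|CN_{uw}| \geq 1$ is enumerated exactly $|CN_{uw}|$ times, and pairs with $|CN_{uw}| = 0$ are handled implicitly by subtracting the number of discovered pairs from $\binom{|V_i|}{2}$, without ever materialising the quadratically-many zero-CN pairs. Combining the per-iteration bound with the at-most-$i_{max}$ iterations yields the claimed total time $O(i_{max} \sum_{v \in V} d_v^2)$.
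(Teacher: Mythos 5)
Your proposal is correct and follows essentially the same route as the paper: bound each of the at most $i_{max}$ iterations by $O(\sum_{v\in V} d_v^2)$ via wedge enumeration for the common-neighbor counts, and observe that everything else is $O(|E_i|)$ and hence dominated. You are in fact somewhat more careful than the paper's own proof, which does not explicitly account for the cost of root-finding in Equation~\eqref{eq:solve_se} or for handling the zero-CN pairs implicitly; both of your additions are sound and fit within the stated budget.
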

\begin{proof}
	We shall show that it takes $O(\sum_{v \in V} d_v^2)$ to generate each layer.
	The time complexity consists of
	(1) that of computing $\tilde{c}_i^*$ (Line~\ref{line:approx_sp}) and
	(2) that of sampling weighty edges (Lines~\ref{line:sample_s}-\ref{line:sample_e}).
	For (1), checking the CNs of all pairs can be done by enumerating all neighbor pairs of each node, which takes $O(\sum_{v \in V} d_v^2(G_i))=O(\sum_{v \in V} d_v^2(G))$ time.
	For (2), sampling among $E_i$ takes $O(\vert E_i\vert ) = O(\vert E\vert ) = O(\sum_{v \in V} d_v(G))$ time, completing the proof.
\end{proof}

The following theorem states that Algorithm~\ref{alg:weight-assign} indeed preserves all the formulated properties and is What are the relations between the edge weights and the topology in real-world graphs?
Given only the topology of a graph, how can we assign realistic weights to its edges based on the relations? 
Several trials have been done for \textit{edge-weight prediction} where some unknown edge weights are predicted with most edge weights known.
There are also existing works on generating both topology and edge weights of weighted graphs.
Differently, we are interested in generating edge weights {that are realistic in a macroscopic scope}, merely from the topology, which is unexplored and challenging.
To this end, we explore and exploit the patterns involving edge weights and topology in real-world graphs.
Specifically, we divide each graph into \textit{layers} where each layer consists of the edges with weights at least
a threshold.
We observe consistent and surprising patterns appearing in multiple layers:
the similarity between being adjacent and having high weights,
and the nearly-linear growth of the fraction of edges having high weights with the number of common neighbors.
We also observe a power-law pattern that connects the layers.
Based on the observations, we propose \ours, an algorithm assigning realistic edge weights to a given topology.
The algorithm relies on only \textit{two} parameters, preserves \textit{all} the observed patterns, and produces more realistic weights than the baseline methods with the same number of, or even more, parameters. a valid approach for Problem~\ref{pro:assign_weights_formal}.
\begin{theorem}\label{thm:pear_satisfy}
	Given any $\bar{G} = (V, E)$, $a$, and $k$, the output of \ours (Algorithm~\ref{alg:weight-assign}) satisfies Properties~\eqref{eq:foses_linear}-\eqref{eq:foses_powerlaw} up to integer rounding.
\end{theorem}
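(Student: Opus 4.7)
The plan is to argue by induction on the layer index $i$, showing that at each iteration of the main loop the output graph $G^{out}$ satisfies Properties~\eqref{eq:foses_linear}--\eqref{eq:foses_powerlaw} by direct construction. The inductive hypothesis to maintain is that the algorithm's internal $G_i$ coincides with $G^{out}_i$ as determined by the final weight assignment $W$. This holds at $i=1$ since $G_1 = \bar{G}$ (Line~\ref{line:init_1}), and is preserved because Line~\ref{line:next_layer_weight_assign} overwrites the weight of each surviving edge to $i+1$ whenever it is placed in $E_{i+1}$, so the final $W_e$ is exactly the largest $i$ with $e \in E_i$, matching the layer definition.

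For Property~\eqref{eq:foses_powerlaw}, Line~\ref{line:compute_f0i} directly sets $f_{0;i} := a(|E_{i+1}|/|E_i|)^k$. Under the inductive hypothesis $|E_{i+1}|/|E_i| = f_{overall;i}(G^{out})$, and the sampling in Lines~\ref{line:sample_s}--\ref{line:sample_e} yields $f_{0;i}(G^{out}) = f_{0;i}$ up to integer rounding, giving the power-law identity. Property~\eqref{eq:foses_linear} is imposed analogously via Line~\ref{line:compute_fci}: for each $c$, exactly $f_{c;i}|E_c|$ (rounded) edges in $E_{c;i}$ are marked weighty, so the realized $f_{c;i}(G^{out})$ equals $\min(1, f_{0;i} + (1-f_{0;i})c/c_i^*)$ up to rounding.

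Property~\eqref{eq:sp_close} requires $c_i^*(G^{out}) = \tilde{c}_i^*(G^{out})$; by the inductive hypothesis the right-hand side equals $\tilde{c}_i^*$ as computed in the algorithm. Plugging $c = \tilde{c}_i^* = c_i^*$ into the formula for Property~\eqref{eq:foses_linear} already gives $f_{\tilde{c}_i^*;i}(G^{out}) = 1$, so $c_i^*(G^{out}) \leq \tilde{c}_i^*$. For the reverse inequality, when $c < \tilde{c}_i^*$ the formula yields $f_{0;i} + (1-f_{0;i})c/c_i^* < 1$ whenever $f_{0;i} < 1$; this last condition follows from $a < 1$ together with $|E_{i+1}|/|E_i| < 1$ (guaranteed on any non-terminal iteration by Theorem~\ref{thm:uniq_sol}).

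The main obstacle is the bookkeeping around integer rounding: since $f_{c;i}|E_c|$ is generally non-integer, the realized $f_{c;i}(G^{out})$ may differ from the formula by $O(1/|E_c|)$, and similarly for $f_{0;i}$ and $f_{overall;i}$. The ``up to integer rounding'' qualifier in the statement is what absorbs these discrepancies, so the argument reduces to the construction-by-design sketched above. A secondary detail worth spelling out is that $\tilde{c}_i^*$ must be evaluated on the correct graph $G_i$ rather than $\bar{G}$, which is ensured because Line~\ref{line:approx_sp} fires after the inductive update of $G_i$ at the end of the previous iteration via Line~\ref{line:next_layer_construct}.
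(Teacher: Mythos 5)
Your proof is correct and follows essentially the same approach as the paper's, which simply notes that Properties~\eqref{eq:foses_linear} and \eqref{eq:sp_close} are enforced by construction (Lines~\ref{line:compute_fci} and \ref{line:approx_sp}) and that Property~\eqref{eq:foses_powerlaw} follows because Equation~\eqref{eq:solve_se} combines it with the others. Your version is in fact more careful than the paper's: the inductive identification of the internal $G_i$ with the output's layer-$i$, and the two-sided argument that the realized FoWE saturation point equals $\tilde{c}_i^*$ (using $f_{0;i}<1$), are details the paper leaves implicit.
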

\begin{proof}
	Property~\eqref{eq:foses_linear} is explicitly preserved by Line~\ref{line:compute_fci},
	and Property~\eqref{eq:sp_close} is explicitly preserved by Line~\ref{line:approx_sp}.
	Regarding Property~\eqref{eq:foses_powerlaw}, since Property~\eqref{eq:foses_linear} is preserved, the solution of Equation~\eqref{eq:solve_se} which combines Properties~\eqref{eq:foses_linear} and \eqref{eq:foses_powerlaw} preserves Property~\eqref{eq:foses_powerlaw}, completing the proof.
\end{proof}

\begin{remark}
	Due to the interconnectedness of all the three properties, for any fixed $a$ and $k$, the output of \ours is unique up to the sampling (Line~\ref{line:sample_s}-\ref{line:sample_e}), which is evidentially supported by the small standard variations in Tables~\ref{tab:main_res} and \ref{tab:main_res:two}.
\end{remark}

\section{Experiments}\label{sec:eval}
In this section, through experiments on the real-world graphs, we shall show that, in most cases, \ours generates realistic edge weights for a given topology with only \textit{two} parameters and \textit{without sophisticated searching or fine-tuning} on the parameters.

\subsection{Baseline methods and experimental settings}\label{sec:eval:settings}

{The following baseline methods (PRD, SCN, SEB, PEB, and STC) are \textit{unsupervised} in that they do not use any explicit ground truth edge-weight information.
	However, for these methods to output meaningful predictions, we provide the ground-truth number of edges for each layer-$i$ (i.e., $\vert E_i \vert$) with $2 \leq i \leq 5$ to them.
	Such \textit{additional} information is \textit{not} provided to \ours.\footnote{{The $\vert E_i \vert$'s (specifically, $\vert E_1 \vert$, $\vert E_2 \vert$, $\vert E_3 \vert$, and $\vert E_4 \vert$) are essentially \textit{four} parameters, compared to only \textit{two} parameters used in \ours.}}}
\begin{itemize}
	\item \textbf{PRD (purely random).} The PRD method repeatedly uniformly at random chooses an edge and increments its weight until the $\vert E_i \vert$'s are satisfied.
	\item \textbf{SCN (sorting-CN).} Instead of random sampling, the SCN method sorts the edges by the number of CNs and assigns the weights accordingly (higher weights to the edges with more CNs).\footnote{{The CNs are counted in each original graph (i.e., layer-$i$) instead of in each layer.}
		An optimization problem in~\citep{adriaens2020relaxing} of maximizing the total edge weights of all triangles is equivalent to this method.}
	\item \textbf{SEB (sorting-embedding).} The SEB method sorts the edges by the inner product of the node embeddings of the two endpoints of each edge and assigns higher weights to the edges with higher inner products.
	{The node embeddings are produced by two different methods, RandNE~\citep{zhang2018billion, karateclub} and
		node2vec~\citep{grover2016node2vec}.
		We use SEB-R and SEB-N to denote the results using RandNE and node2vec, respectively.}
	The feature dimension is set as $32$, and all the other parameters are kept the same as in the original paper.
	\item \textbf{PEB (probability-embedding).} The PEB method uses node embeddings as in SEB, and it repeatedly chooses an edge and increments its weight until the $\vert E_i\vert $'s are satisfied, where the exponential of the inner product of the node embeddings of the two endpoints of each edge is used as the weight of the edge in the sampling.
	{We use PEB-R and PEB-N to denote the results using RandNE and node2vec, respectively.}
	\item \textbf{STC (strong triadic closure).} The STC method makes use of the strong triadic closure~\citep{sintos2014using} (STC) principle. Specifically, for each layer-$i$, the STC method first uses a greedy algorithm to maximize the number of candidate weighty edges in the layer without having any open triangle (i.e., three nodes $v_1,v_2,v_3$ s.t. the two edges $(v_1, v_2)$ and $(v_1, v_3)$ exist and $(v_2, v_3)$ does not exist).
	After that, the STC method uniformly at random samples $\vert E_i \vert$ weighty edges among all the candidates.\footnote{We simply take all the candidates, if $\vert E_i \vert$ is larger than the number of candidates. We have also tried including all the candidates as the weighty edges, which, however, for each dataset, produced layers that only change slightly after layer-$2$, and thus cannot produce meaningful edge weights more than binary categorization.}
\end{itemize}

By contrast, for the proposed method \ours, we only consider \textit{two} settings $(a, k) \in \setpar{(0.98, 1.02), (0.7, 1.3)}$,
where for the two \textit{co} datasets and the four \textit{sx} datasets we use $(a, k) = (0.98, 1.02)$; and for the remaining datasets we use $(a, k) = (0.7, 1.3)$.
For each dataset, we report the results {in the better setting}.
Note that these settings are chosen without relying on ground-truth weights.
Specifically, when we choose the parameters, we simply move two parameters $a$ and $k$ in the opposite directions, while keeping $a + k = 2$, so that all the candidate parameter settings satisfy the assumptions in Theorem~\ref{thm:uniq_sol}.
Also, the best-performing settings show clear domain-based patterns.
Specifically, we can use the same parameter setting for datasets in the same domain.
{Although it is challenging to find the best-performing setting for a dataset merely based on its topology,\footnote{As a weighted graph evolves, its topology may stay the same, but the edge weights representing the repetitions of edges may change. Such scenarios imply that for a given topology, multiple optimal groups of edge weights exist, and thus it is hard to find the best-performing setting.}
	the structural similarity between datasets within the same real-world domain~\citep{chakrabarti2006graph, wills2020metrics} can be utilized to find a proper setting for each dataset based on the topology.
	See Appendix~\ref{app:choose_params} for more discussions.}

We {also consider} two \textit{supervised} edge-weight prediction methods {directly supervised by ground-truth edge weights.}
{Notably, the supervised methods deal with the prediction task as a classification task and do not rely on the layer structure.}
We give the {below} supervised methods $10\%$ of the ground truth edge weights as the input training set (and another $10\%$ as the validation set if needed).
We make sure each of the training set covers all five classes: edges of weight $1,2,3,4,\geq 5$, corresponding to the five layers we are studying.
Notably, the supervised methods use much more parameters.
The considered methods are:
\begin{itemize}
	\item \textbf{RFF (random forest-feature).} The RFF method uses a random forest classifier~\citep{breiman2001random} with the 14 metrics we have used in Section~\ref{subsec:why_cns} (see Table~\ref{tab:pearson-rep}),
	which follows the procedure in a previous work~\citep{fu2018link} except {for that} the random forest is smaller and some metrics are not used due to its high computational cost as mentioned in Section~\ref{subsec:why_cns}.
	The hyperparameters of random forest are listed as follows:
	the number of trees = 32,
	the maximum depth of the tree = 5,
	and all the other parameters are kept the same as in the original work~\citep{breiman2001random}.
	\item \textbf{NEB (neural network-embedding).} The NEB method uses a neural network consisting of one bilinear layer. For each edge, the node embeddings of both endpoints, {which are obtained as in SEB}, are used as the input, and the neural network is trained to {minimize a classification loss (cross-entropy).}    
	{We use NEB-R and NEB-N to denote the results using RandNE and node2vec, respectively.}
\end{itemize}

\begin{figure*}[t!]
	\centering	\hspace{2mm}\includegraphics[width=0.8\textwidth]{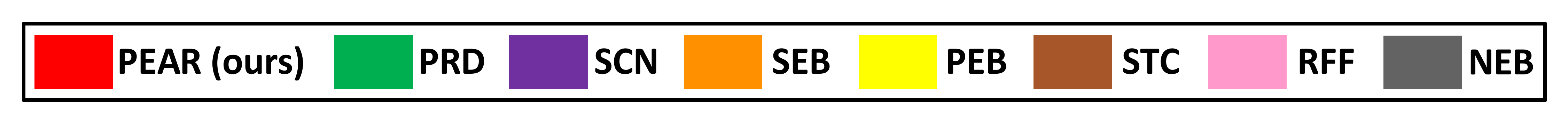}\\
	\includegraphics[scale=0.26]{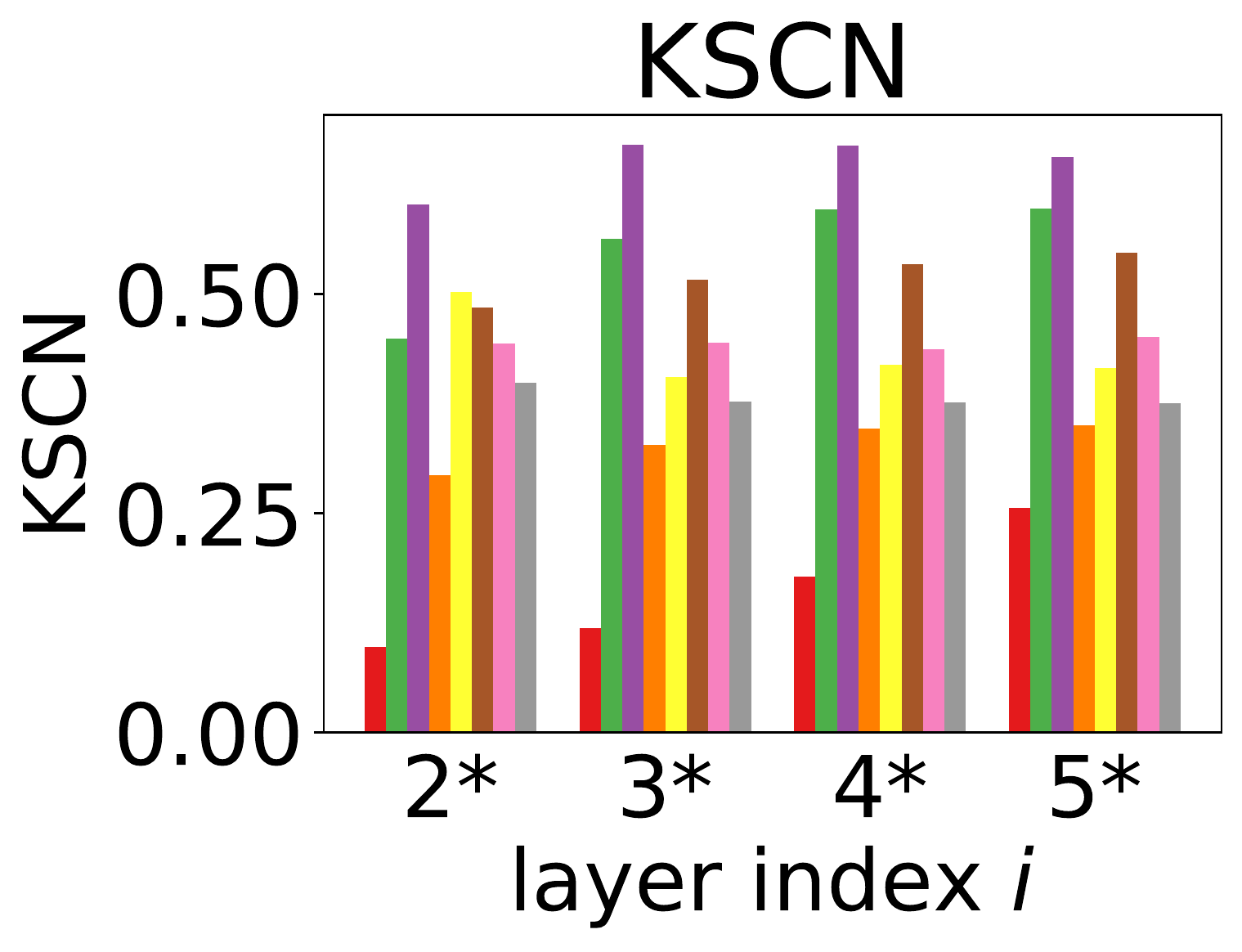}
	\includegraphics[scale=0.26]{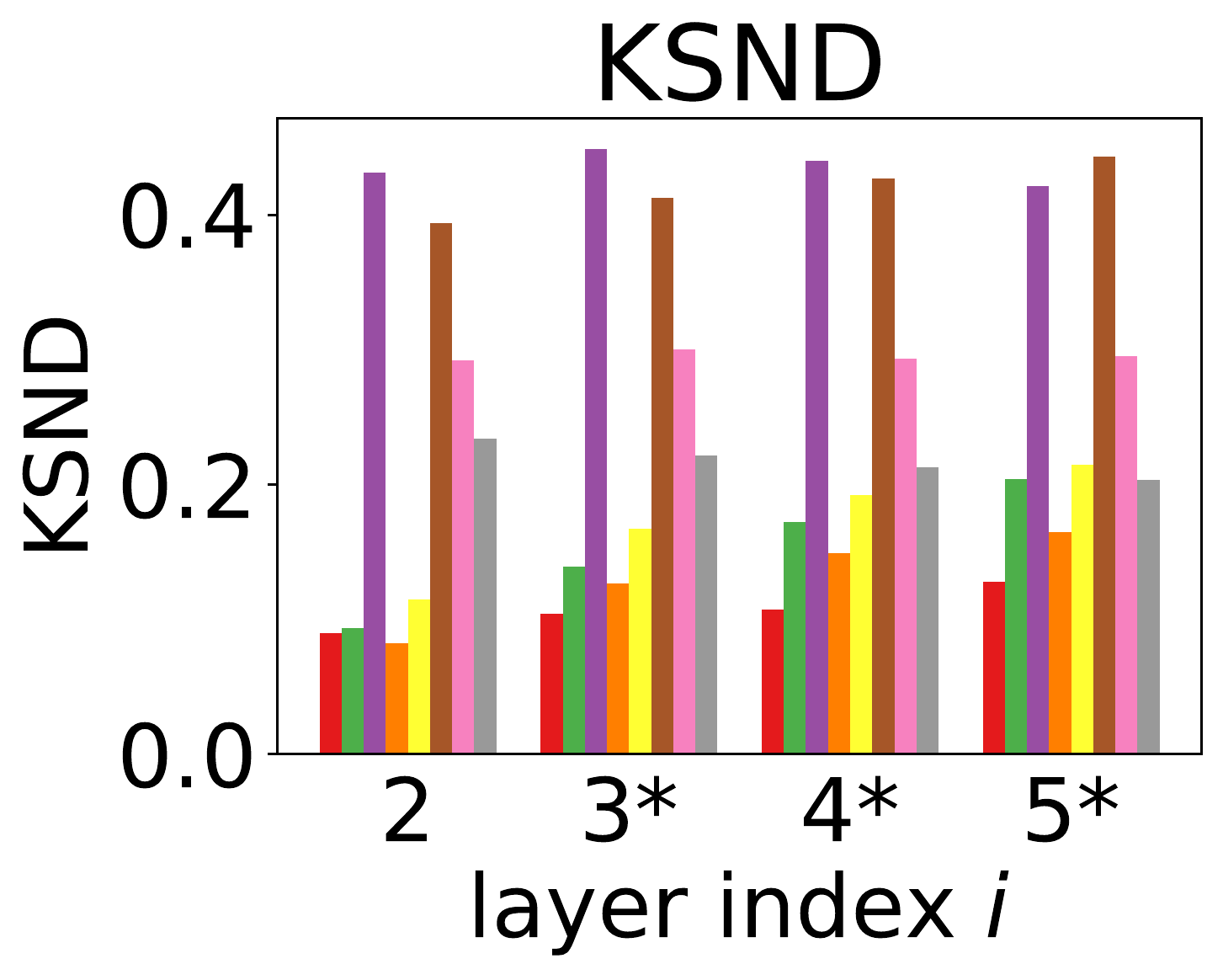}
	\includegraphics[scale=0.26]{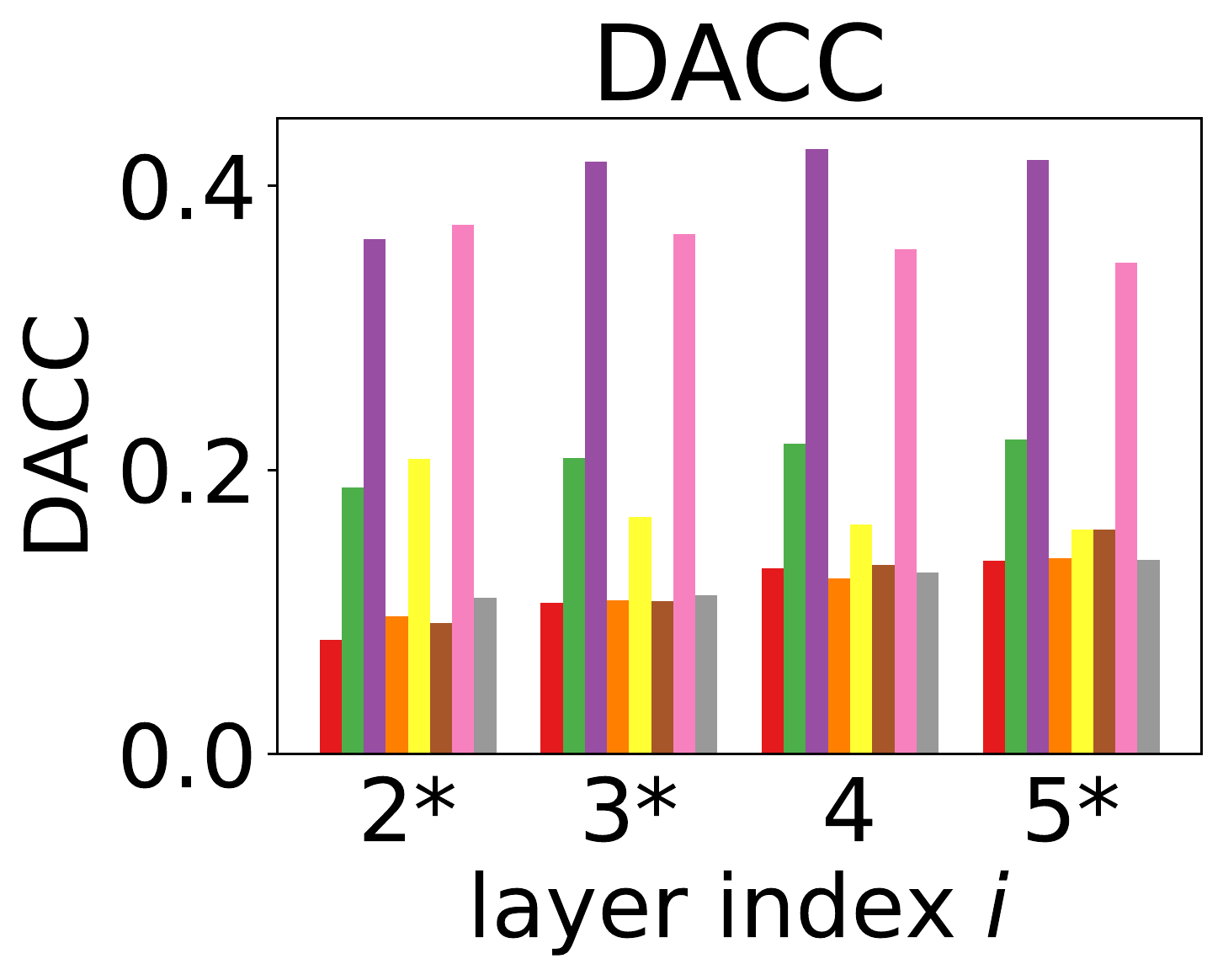}
	\includegraphics[scale=0.26]{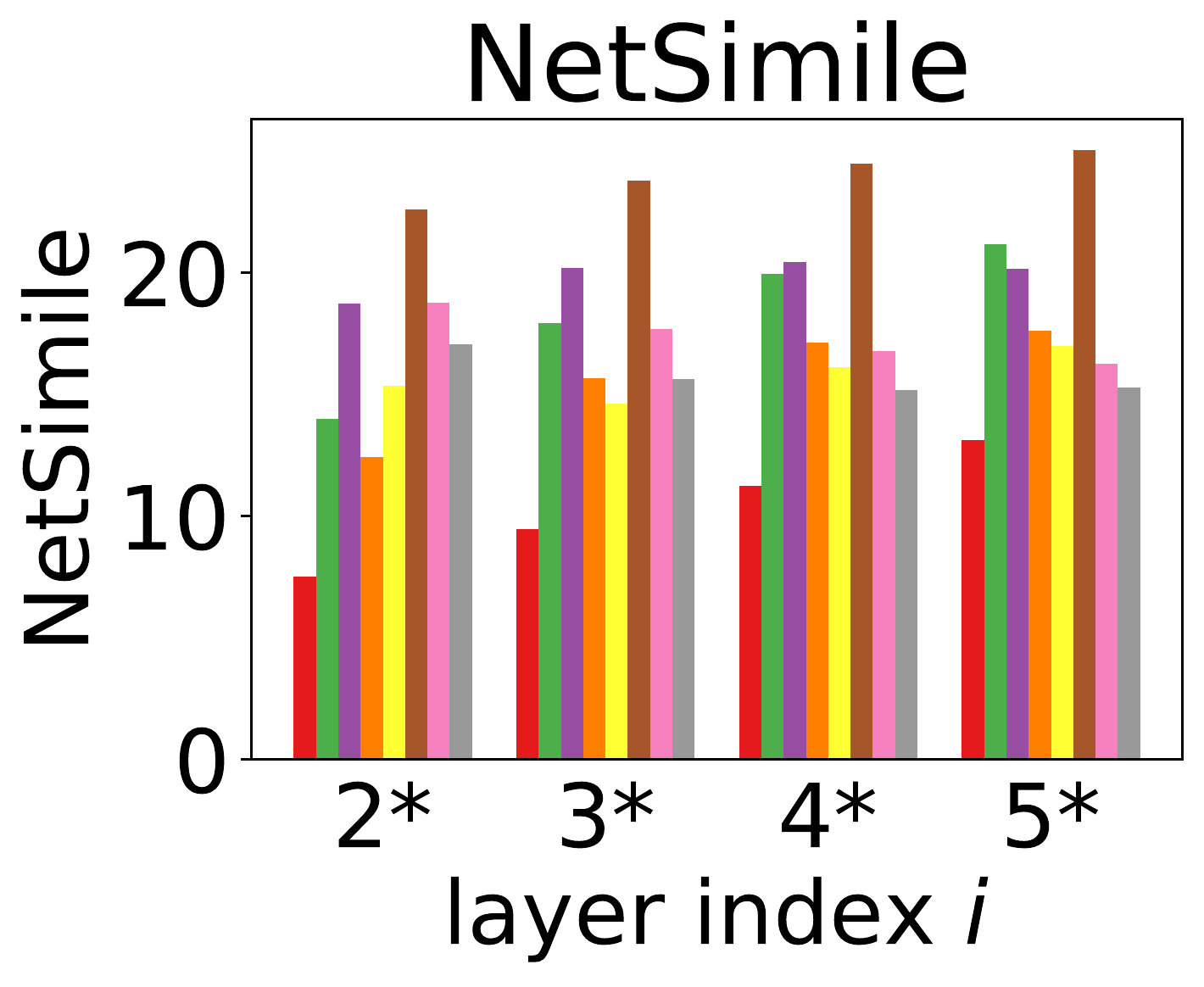}
	\caption{
		\textbf{\ours generates edge weights that are realistic {in} each layer.}
		{For each generated layer-$i$ with $i \in \setpar{2, 3, 4, 5}$, each method, and each metric, we report the average over all datasets.}
		We use asterisk marks ($*$) to indicate the {14 (out of 16)} cases where \ours performs best.
		See Tables~\ref{tab:main_res} and \ref{tab:main_res:two} for the numerical comparison in detail.
	}
	\label{fig:res_by_layer}
\end{figure*}

\begin{figure*}[htb]
	\centering
	\includegraphics[width=0.8\textwidth]{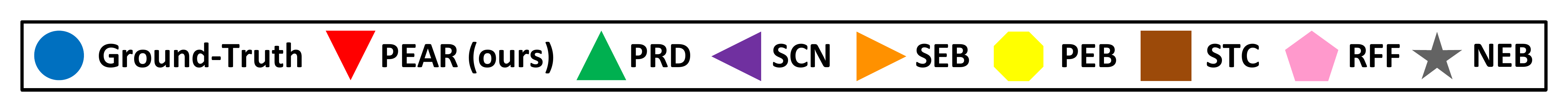}\\
	\begin{subfigure}[b]{0.6\textwidth}
		\centering
		\includegraphics[width=\textwidth]{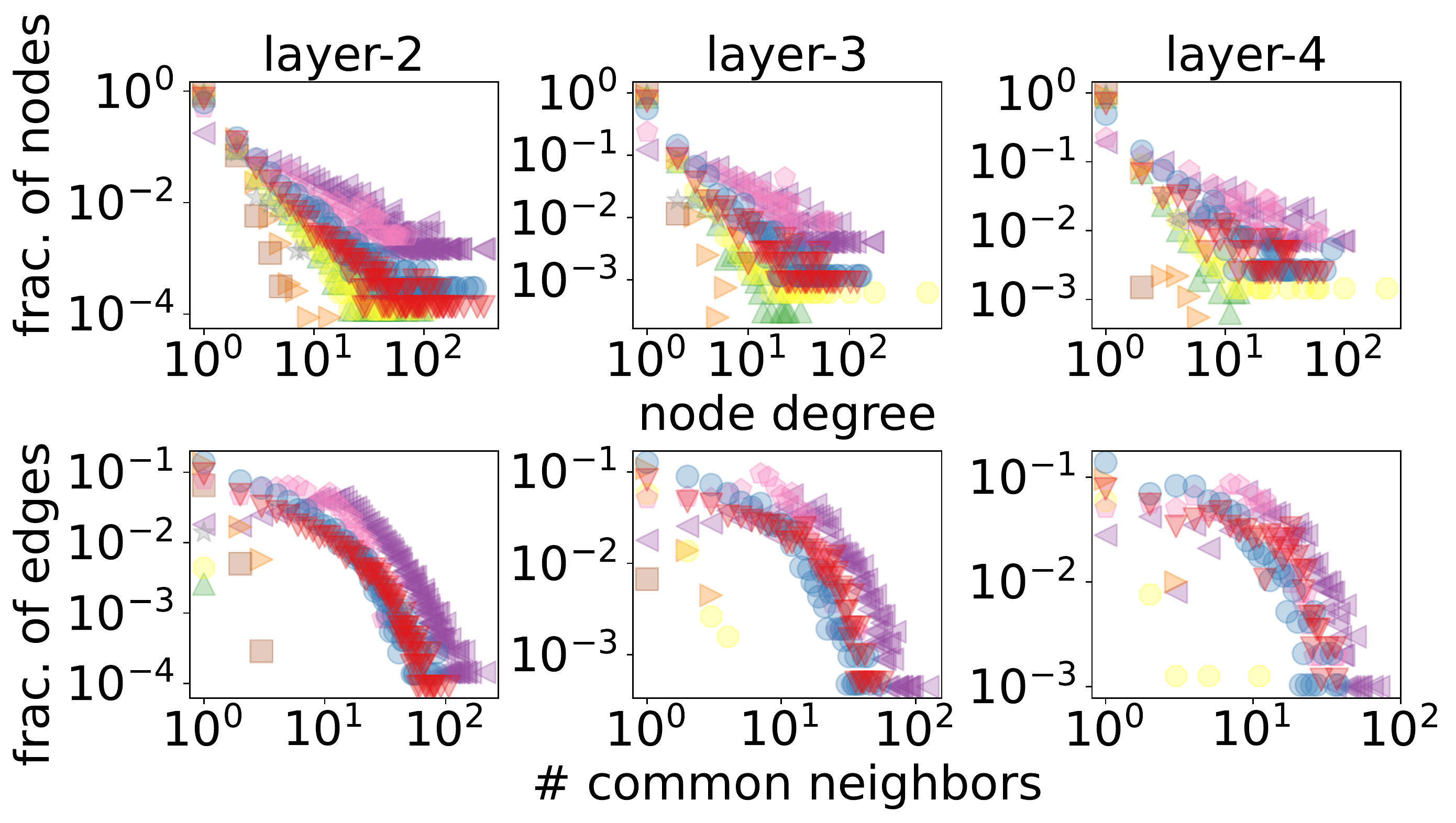}
		\vspace{-6mm}
		\caption{\textit{threads-ubuntu}, $(a, k) = (0.7, 1.3)$}
	\end{subfigure}
	\begin{subfigure}[b]{0.6\textwidth}
		\centering
		\includegraphics[width=\textwidth]{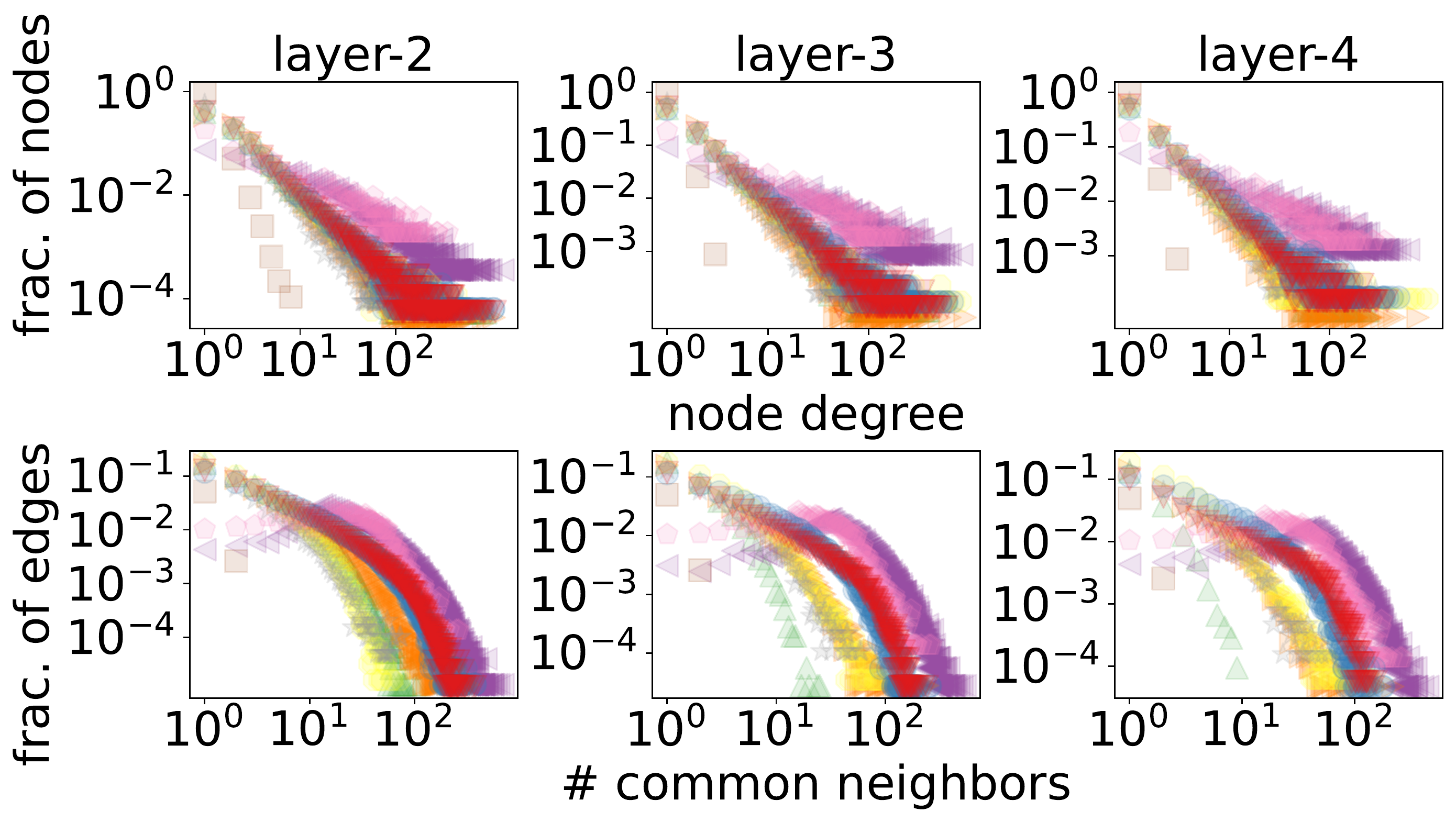}
		\vspace{-6mm}
		\caption{\textit{sx-math}, $(a, k) = (0.98, 1.02)$}
	\end{subfigure}
	\caption{
		\textbf{\ours produces realistic edge weight.}
		For each $i \in \setpar{2, 3, 4}$ and each method including the ground truth, we report the node-degree and number-of-CN distributions of the generated layer-$i$.
		See Tables~\ref{tab:main_res} and \ref{tab:main_res:two} for the numerical comparison in detail.
	}
	\label{fig:evaluation-dist}
\end{figure*}

\begingroup
\setlength{\tabcolsep}{2pt}
\begin{table}
\centering
\caption{\textbf{\ours %consistently 
generates more realistic edge weights than the baseline methods in terms of KSCN and KSND.} %It should be noticed that {\ours requires less prior knowledge than the others (and uses fewer parameters than some)}, as described in Section~\ref{sec:eval:settings}.
For each dataset and each metric, we report the average over the layers generated by each method, as well as
the average (AVG) over all datasets and the average rank (A.R.), where 
% 		For each dataset and each metric, 
the best result is in \textbf{bold}, and the second-best one is \underline{underlined}
(OOM = out of memory; FAIL = unable to output a meaningful layer).
% See Appendix~\ref{app:full_exp_res} for the full results.
See the online appendix~\citep{onlineSuppl} for the full results.
}
\label{tab:main_res}                
\begin{adjustbox}{scale=0.75}
% \begin{tabular}{\textwidth}{ c | c| *{11}{c} | c c }                    
\begin{tabular}{ c | c | *{11}{c} | c c }                    
\cmidrule{1-15}
metric & method & OF & FL & th-UB & th-MA & th-SO & sx-UB & sx-MA & sx-SO & sx-SU & co-DB & co-GE & AVG & A.R. \\
\cmidrule{1-15}
\multirow{11}{*}{KSCN}
& PRD & 0.542 & 0.522 & 0.722 & 0.840 & 0.768 & 0.242 & 0.486 & 0.259 & 0.299 & 0.656 & 0.731 & 0.55 & 8.36 \\
& SCN & 0.485 & 0.661 & 0.528 & 0.579 & 0.634 & 0.782 & 0.681 & 0.795 & 0.768 & 0.661 & 0.573 & 0.65 & 8.18 \\
& SEB-R & 0.217 & 0.401 & 0.712 & 0.773 & 0.737 & 0.252 & 0.362 & 0.305 & 0.328 & 0.484 & 0.488 & 0.46 & 7.00 \\
& SEB-N & 0.475 & \bolden{0.184} & 0.670 & 0.601 & 0.624 & \underline{0.100} & \underline{0.221} & 0.159 & 0.237 & \underline{0.155} & 0.199 & \underline{0.33} & \underline{3.73} \\
& PEB-R & 0.626 & 0.368 & 0.665 & 0.711 & 0.607 & 0.109 & 0.310 & \bolden{0.137} & \underline{0.154} & 0.529 & 0.577 & 0.44 & 4.82 \\ 
& PEB-N & 0.626 & 0.368 & 0.665 & 0.711 & 0.607 & 0.109 & 0.310 & \bolden{0.137} & \underline{0.154} & 0.529 & 0.577 & 0.44 & 4.82 \\
& STC & 0.679 & 0.474 & 0.705 & 0.772 & 0.698 & 0.292 & 0.707 & 0.380 & 0.362 & 0.238 & 0.419 & 0.52 & 7.82 \\
& RFF & 0.265 & \underline{0.236} & \underline{0.346} & \underline{0.368} & \underline{0.434} & 0.805 & 0.560 & 0.734 & 0.713 & 0.293 & \underline{0.131} & 0.44 & 5.36 \\
& NEB-R & 0.147 & 0.339 & 0.719 & 0.899 & 0.559 & 0.274 & 0.416 & 0.304 & 0.364 & \bolden{0.096} & \bolden{0.085} & 0.38 & 5.45 \\
& NEB-N & \underline{0.140} & FAIL  & FAIL  & 0.901 & 0.732 & 0.304 & 0.749 & \bolden{0.137} & 0.277 & 0.224 & 0.383 & FAIL & 7.09 \\
\cmidrule{2-15}
& \multirow{2}{*}{\bolden{\ours (ours)}} & \bolden{0.074} & 0.241 & \bolden{0.168} & \bolden{0.186} & \bolden{0.126} & \bolden{0.076} & \bolden{0.072} & 0.239 & \bolden{0.103} & {0.329} & {0.181} & \multirow{2}{*}{\bolden{0.16}} & \multirow{2}{*}{\bolden{2.18}} \\
&  & \bolden{($\pm$0.003)} & ($\pm$0.017) & \bolden{($\pm$0.015)} & \bolden{($\pm$0.006)} & \bolden{($\pm$0.008)} & \bolden{($\pm$0.003)} & \bolden{($\pm$0.001)} & ($\pm$0.001) & \bolden{($\pm$0.002)} & {($\pm$0.013)} & ($\pm$0.008) &  &  \\
\cmidrule{1-15}
\multirow{11}{*}{KSND}
& PRD & 0.096 & 0.164 & 0.336 & 0.219 & 0.161 & 0.040 & \bolden{0.059} & \bolden{0.027} & {0.045} & 0.237 & 0.292 & 0.15 & 4.36\\
& SCN & 0.359 & 0.424 & 0.406 & 0.587 & 0.443 & 0.544 & 0.556 & 0.371 & 0.506 & 0.311 & 0.283 & 0.44 & 9.91 \\
& SEB-R & 0.087 & 0.180 & 0.361 & \underline{0.201} & 0.157 & \underline{0.038} & 0.067 & 0.038 & 0.057 & 0.231 & 0.289 & 0.16 & 4.64 \\
& SEB-N & 0.096 & 0.121 & 0.336 & 0.212 & 0.155 & \bolden{0.027} & \underline{0.065} & \underline{0.034} & \underline{0.041} & 0.155 & 0.198 & \underline{0.13} & \underline{3.09} \\
& PEB-R & 0.232 & 0.172 & 0.289 & 0.248 & \underline{0.133} & 0.041 & 0.085 & 0.071 & 0.057 & 0.274 & 0.293 & 0.17 & 5.27\\
& PEB-N & 0.232 & 0.172 & 0.289 & 0.248 & \underline{0.133} & 0.042 & 0.085 & 0.071 & 0.057 & 0.274 & 0.294 & 0.17 & 5.45 \\
& STC & 0.577 & 0.385 & 0.452 & 0.441 & 0.397 & 0.303 & 0.492 & 0.531 & 0.354 & 0.306 & 0.374 & 0.42 & 10.00 \\
& RFF & 0.115 & 0.146 & \underline{0.267} & 0.388 & 0.322 & 0.602 & 0.416 & 0.246 & 0.410 & {0.217} & {0.121} & 0.30 & 6.64\\
& NEB-R & 0.148 & \underline{0.108} & 0.389 & 0.430 & 0.326 & 0.128 & 0.163 & 0.346 & 0.194 & \bolden{0.055} & \underline{0.113} & 0.22 & 6.18 \\
& NEB-N & \underline{0.085} & FAIL  & FAIL  & 0.327 & 0.352 & 0.229 & 0.247 & 0.208 & 0.165 & \underline{0.090} & 0.284 & FAIL & 7.09 \\
\cmidrule{2-15}
& \multirow{2}{*}{\bolden{\ours (ours)}} & \bolden{0.069} & \bolden{0.070} & \bolden{0.157} & \bolden{0.175} & \bolden{0.129} & \bolden{0.027} & 0.146 & 0.178 & \bolden{0.022} & {0.105} & \bolden{0.105} & \multirow{2}{*}{\bolden{0.11}} & \multirow{2}{*}{\bolden{2.09}} \\ 
& & \bolden{($\pm$0.002)} & \bolden{($\pm$0.002)} & \bolden{($\pm$0.009)} & \bolden{($\pm$0.009)} & \bolden{($\pm$0.002)} & \bolden{($\pm$0.000)} & ($\pm$0.001) & ($\pm$0.003) & \bolden{($\pm$0.000)} & {($\pm$0.007)} & \bolden{($\pm$0.001)} & & \\
\cmidrule{1-15}
\end{tabular}
\end{adjustbox}
% \end{sidewaystable}                
\end{table}                

\begin{table}
\centering
\caption{\textbf{\ours %consistently 
generates more realistic edge weights than the baseline methods in terms of DACC and NetSimile.} %It should be noticed that {\ours requires less prior knowledge than the others (and uses fewer parameters than some)}, as described in Section~\ref{sec:eval:settings}.
For each dataset and each metric, we report the average over the layers generated by each method, as well as
the average (AVG) over all datasets and the average rank (A.R.), where 
% 		For each dataset and each metric, 
the best result is in \textbf{bold}, and the second-best one is \underline{underlined}
(OOM = out of memory; FAIL = unable to output a meaningful layer).
% See Appendix~\ref{app:full_exp_res} for the full results.
See the online appendix~\citep{onlineSuppl} for the full results.
}
%The best one is highlighted in bold and the second best underlined.
%(A.R. = average rank; 
\label{tab:main_res:two}                
\begin{adjustbox}{scale=0.75}
% \begin{tabular}{\textwidth}{ c | c| *{11}{c} | c c }                    
\begin{tabular}{ c | c | *{11}{c} | c c }                    
\cmidrule{1-15}
metric & method & OF & FL & th-UB & th-MA & th-SO & sx-UB & sx-MA & sx-SO & sx-SU & co-DB & co-GE & AVG & A.R. \\
\cmidrule{1-15}
\multirow{11}{*}{DACC}
& PRD & 0.291 & 0.184 & 0.239 & 0.371 & 0.211 & 0.036 & 0.139 & 0.036 & 0.047 & 0.370 & 0.376 & 0.21 & 8.45 \\
& SCN & 0.166 & 0.417 & 0.434 & 0.457 & 0.486 & 0.587 & 0.523 & 0.520 & 0.589 & 0.160 & 0.129 & 0.41 & 9.00 \\
& SEB-R & 0.060 & 0.135 & 0.235 & 0.356 & 0.204 & 0.035 & 0.118 & 0.036 & 0.048 & 0.271 & 0.213 & 0.16 & 6.55 \\
& SEB-N & 0.246 & \textbf{0.036} & \underline{0.215} & 0.311 & 0.190 & \underline{0.018} & 0.102 & 0.026 & 0.039 & \underline{0.054} & \underline{0.046} & \underline{0.12} & \underline{3.55} \\
& PEB-R & 0.240 & 0.131 & 0.225 & 0.313 & 0.190 & 0.023 & {0.064} & \textbf{0.024} & \underline{0.018} & 0.344 & 0.341 & 0.17 & 4.64 \\
& PEB-N & 0.240 & 0.131 & 0.225 & 0.313 & 0.190 & 0.023 & \underline{0.063} & \textbf{0.024} & \textbf{0.017} & 0.344 & 0.341 & 0.17 & 4.45 \\
& STC & \textbf{0.027} & \underline{0.077} & 0.231 & \underline{0.273} & \underline{0.155} & 0.036 & 0.148 & 0.036 & 0.046 & 0.117 & 0.205 & 0.12 & 4.55 \\
& RFF & 0.187 & 0.106 & 0.279 & 0.418 & 0.399 & 0.645 & 0.468 & 0.548 & 0.595 & 0.197 & 0.120 & 0.36 & 8.36 \\
& NEB-R & 0.126 & 0.142 & 0.237 & 0.377 & 0.162 & 0.032 & 0.103 & \textbf{0.024} & 0.049 & \textbf{0.051} & \underline{0.037} & 0.12 & 4.73 \\
& NEB-N & \underline{0.028} & FAIL  & FAIL  & 0.378 & 0.195 & 0.042 & 0.173 & 0.030 & 0.031 & 0.087 & 0.175 & FAIL & 6.91 \\
\cmidrule{2-15}
& \multirow{2}{*}{\textbf{\ours (ours)}} & 0.158 & {0.080} & \textbf{0.129} & \textbf{0.262} & \textbf{0.146} & \textbf{0.016} & \textbf{0.043} & 0.037 & 0.023 & {0.227} & {0.127} & \multirow{2}{*}{\textbf{0.11}} & \multirow{2}{*}{\textbf{3.27}} \\
& & ($\pm$0.006) & {($\pm$0.007)} & \textbf{($\pm$0.004)} & \textbf{($\pm$0.001)} & \textbf{($\pm$0.001)} & \textbf{($\pm$0.000)} & \textbf{($\pm$0.000)} & ($\pm$0.000) & ($\pm$0.001) & ($\pm$0.005) & {($\pm$0.004)} & & \\
\cmidrule{1-15}				
\multirow{11}{*}{NetSimile}
& PRD & 11.99 & 16.81 & 24.71 & 23.80 & 22.38 & 15.33 & 12.49 & OOM & 15.39 & 19.83 & 19.93 & 18.27 & 7.10 \\
& SCN & 15.47 & 20.56 & 18.68 & 21.76 & 23.39 & 24.89 & 22.77 & OOM & 25.45 & \underline{12.62} & 13.32 & 19.89 & 7.30 \\
& SEB-R & 9.34  & 16.26 & 25.15 & 23.70 & 21.32 & 16.46 & 12.06 & OOM & 17.38 & 17.74 & 17.64 & 17.71 & 6.20 \\
& SEB-N & 10.67 & \underline{12.29} & 24.52 & 23.29 & 22.77 & \underline{8.51} & \textbf{8.91}  & OOM & \underline{12.85} & 16.06 & 17.34 & \underline{15.72} & \underline{4.30} \\ 
& PEB-R & 12.03 & 14.11 & 19.28 & 19.40 & \underline{17.45} & {13.04} & 12.81 & OOM & 13.65 & 18.51 & 17.55 & 15.78 & 4.90 \\
& PEB-N & 12.02 & 14.12 & 19.29 & 19.40 & 17.45 & 13.05 & 12.80 & OOM & 13.65 & 18.49 & 17.55 & 15.78 & 5.00 \\
& STC & 23.57 & 21.16 & 27.70 & 26.03 & 23.08 & 24.04 & 25.53 & OOM & 25.08 & 21.35 & 22.27 & 23.98 & 10.00 \\
& RFF & 12.26 & 13.81 & \underline{16.51} & \underline{18.10} & 18.96 & 25.61 & 20.87 & OOM & 25.42 & \textbf{14.71} & \underline{7.54}  & 17.38 & 5.30 \\
& NEB-R & 11.10 & 12.95 & 21.31 & 22.88 & 20.49 & 17.11 & 14.20 & OOM & 19.27 & \textbf{9.73}  & \underline{8.89}  & 15.79 & 4.90 \\
& NEB-N & \textbf{6.21}  & FAIL  & FAIL  & 26.69 & 22.96 & 29.80 & 27.12 & OOM & 22.49 & 19.52 & 21.79 & FAIL  & 9.20 \\
\cmidrule{2-15}
& \multirow{2}{*}{\textbf{\ours (ours)}} & \underline{8.09} & \textbf{8.92} & \textbf{9.33} & \textbf{14.85} & \textbf{13.79} & \textbf{6.03} & \underline{9.40} & \multirow{2}{*}{OOM} & \textbf{6.37} & 15.73 & 10.87 & \multirow{2}{*}{\textbf{10.34}} & \multirow{2}{*}{\textbf{1.70}} \\
& & \underline{($\pm$0.32)} & \textbf{($\pm$0.17)} & \textbf{($\pm$0.22)} & \textbf{($\pm$0.07)} & \textbf{($\pm$0.10)} & \textbf{($\pm$0.12)} & \underline{($\pm$0.04)} & & \textbf{($\pm$0.14)} & ($\pm$0.24) & ($\pm$0.04) & & \\
\cmidrule{1-15}
\end{tabular}
\end{adjustbox}
\end{table}                

\endgroup

\subsection{Evaluation methods}\label{subsec:eval_methods}

We would re-emphasize that we aim to generate edge weights that are realistic in a \textit{macroscopic} scope, and thus the evaluation should also be in a macroscopic scope.
We report the following metrics including several graph statistics that have been widely used for evaluating graph generators~\citep{leskovec2005realistic, shuai2013pattern, cao2015grarep, heath2011generating} {to compare, for each $2\leq i\leq 5$, the layer-$i$ produced by each method and the original one}:
(1) KS statistic for number-of-CN distributions \textbf{(KSCN)},
(2) KS statistic for node-degree distributions \textbf{(KSND)},
(3) difference in average clustering coefficients \textbf{(DACC)}, and
(4) a graph distance measure computed by \textbf{NetSimile}~\citep{berlingerio2012netsimile}.\footnote{Given a graph, NetSimile uses seven node-level structural features to generate a characteristic vector for the graph after feature aggregation over the nodes.
	Notably, we do not need to solve the node-correspondence problem for NetSimile and the measure is size-invariant~\citep{berlingerio2012netsimile}.
	NetSimile runs out of memory on \textit{sx-SO} and the corresponding results are unavailable.}
{The intuition is that if two weighted graphs have similar layers with the same layer index, then the two weighted graphs are similar too.}
Note that the evaluation focuses on the first four layers since, for $i>5$, the ground-truth layer-$i$ is too small or too sparse in some datasets.
{See Appendix~\ref{app:res_motifs} for more analysis using graph motifs.}

\subsection{Results}
First, we show how the methods perform on each dataset.
In Tables~\ref{tab:main_res} and \ref{tab:main_res:two}, for each dataset, each metric, and each method, we report the average value over all the generated layers.
{For \ours, the mean value and the standard deviation over three trials of each setting are reported.}
Overall, \ours has the best average value and the highest average rank among all the methods w.r.t each metric.
Specifically, \ours achieves an average rank of $2.18$, $2.09$, $3.27$, and $1.70$ (there are $11$ methods in total), w.r.t the metric KSCN, KSNC, DACC, and NetSimile, respectively.
{Notably, although supervised with some ground-truth edge weights, the supervised baseline methods do not show clear superiority over the unsupervised ones. In our understanding, this is because the edge-weight classes (i.e., layers) are highly imbalanced (i.e., the numbers of edges with different weights vary a lot), while the ground-truth numbers of edges in each edge-weight class are provided to the unsupervised baseline methods including SEB are helpful.}
{Also, the methods using sophisticated embeddings are sometimes even worse than SCN which assigns edge weights using a simple heuristic based on the number of CNs. In our understanding, this is because local information is important (according to our observations), while the embedding-based methods focus on higher-order information which can be confusing and harmful to the prediction results.}

In Figure~\ref{fig:evaluation-dist}, for the two datasets \textit{th-UB} and \textit{sx-MA}, and for $i \in \setpar{2, 3, 4}$, we report the node-degree and edge-CN distributions of the layer-$i$ generated by each method, where the layers generated by \ours have the two distributions closest to the original ones.
{For each baseline method using embeddings, between the two variants using RandNE and node2vec, we report the results of the variant that performs better.}

We also study how the performance {varies across generated layers.}
In Figure~\ref{fig:res_by_layer}, for each method, each metric, and each layer index $i$, we report the average over all datasets.
{Again, for each baseline method using embeddings, between the two variants using RandNE and node2vec, we report the results of the variant that performs better.}
Overall, \ours performs best for each layer.

\section{{Conclusion and future directions}\label{sec:disc_concl}}

In this work, we explored the relations between edge weights and topology in real-world graphs.
We proposed a new concept called layers (Definition~\ref{def:layers})
with several related concepts (Definition~\ref{def:SEs_and_FSEs}),
and observed several pervasive patterns (Observations~\ref{obs:adj_and_str}-\ref{obs:fose_power_law}).
We also proposed \ours (Algorithm~\ref{alg:weight-assign}), a weight-assignment algorithm with only two parameters, based on the formulation of our observations (Properties~\eqref{eq:foses_linear}-\eqref{eq:foses_powerlaw} and Equations~\eqref{eq:p0-se}-\eqref{eq:solve_se}).
In our experiments on eleven real-world graphs, we showed that \ours generates more realistic edge weights than the baseline methods, {including those requiring much more prior knowledge and parameters (Tables~\ref{tab:main_res}-\ref{tab:main_res:two} and Figures \ref{fig:evaluation-dist}-\ref{fig:res_by_layer}).}

{The observations in this work are macroscopic and the proposed model is based on those observations. Although macroscopic patterns are relatively more robust to outliers compared to microscopic ones, it is still possible that outliers can impair the performance of our method. We plan to explore and examine how outliers can affect the performance of our method and the corresponding solutions.}
{We studied weighted graphs with positive-integer edge weights, and we plan to extend the study on the interplay between edge weights and topology to real-valued edge weights.}
We studied the average behavior of each group of edges or pairs with the same number of common neighbors, and we plan to further explore the patterns within each group.
We also plan to study how the two parameters of \ours affect the output.

% \section*{Declarations}
% \subsection*{Funding}
% This work was supported by National Research Foundation of Korea (NRF) grant funded by the Korea government (MSIT) (No. NRF-2020R1C1C1008296) and Institute of Information \& Communications Technology Planning \& Evaluation (IITP) grant funded by the Korea government (MSIT) (No. 2022-0-00871, Development of AI Autonomy and Knowledge Enhancement for AI Agent
% Collaboration) (No. 2019-0-00075, Artificial Intelligence Graduate School Program (KAIST)).

% \subsection*{Competing interests}
% The authors have no relevant financial or non-financial interests to disclose.

\bibliographystyle{plainnat}
\bibliography{ref}
\clearpage

\appendix

\begin{table*}[t!]
	\begin{center}
		\caption{\textbf{The numbers of common neighbors are simple yet indicative.} Additional results: we report the area under the ROC curve (AUC) of the logistic regression fit to the sequences of each quantity and 
			the binary indicators of repetition.}
		\label{tab:why_cn_auc}
		\resizebox{0.968\linewidth}{!}{%
			\begin{tabular}{ l *{14}{|c} }
				\toprule
				\textbf{dataset} & NC & SA & JC & HP & HD & SI & LI & AA & RA & PA & FM & DL & EC & LP \\
				\midrule
				OF  & 0.704 & 0.636 & 0.641 & 0.508 & 0.637 & 0.641 & 0.632 & 0.707 & 0.723 & 0.705 & 0.571 & 0.692 & 0.666 & 0.704 \\               
				FL      & 0.676 & 0.664 & 0.657 & 0.619 & 0.647 & 0.657 & 0.479 & 0.698 & 0.725 & 0.629 & 0.546 & 0.606 & 0.610 & 0.674 \\              
				th-UB   & 0.890 & 0.689 & 0.702 & 0.590 & 0.704 & 0.702 & 0.499 & 0.879 & 0.832 & 0.122 & 0.805 & 0.781 & 0.872 & 0.895 \\               
				th-MA   & 0.843 & 0.751 & 0.701 & 0.587 & 0.689 & 0.701 & 0.662 & 0.845 & 0.823 & 0.834 & 0.800 & 0.741 & 0.801 & 0.844 \\               
				th-SO   & 0.831 & 0.717 & 0.687 & 0.646 & 0.676 & 0.687 & 0.580 & 0.837 & 0.808 & 0.199 & 0.796 & 0.756 & 0.768 & 0.827 \\
				sx-UB   & 0.586 & 0.579 & 0.576 & 0.565 & 0.575 & 0.576 & 0.446 & 0.585 & 0.581 & 0.592 & 0.582 & 0.576 & 0.577 & 0.598 \\
				sx-MA   & 0.635 & 0.631 & 0.613 & 0.574 & 0.606 & 0.613 & 0.506 & 0.639 & 0.645 & 0.623 & 0.605 & 0.595 & 0.611 & 0.634 \\
				sx-SO   & 0.562 & 0.561 & 0.551 & 0.560 & 0.549 & 0.551 & 0.463 & 0.563 & 0.566 & 0.553 & 0.555 & 0.552 & 0.538 & 0.563 \\
				sx-SU   & 0.588 & 0.576 & 0.570 & 0.570 & 0.569 & 0.570 & 0.455 & 0.587 & 0.583 & 0.593 & 0.585 & 0.580 & 0.578 & 0.596 \\
				co-DB   & 0.624 & 0.529 & 0.523 & 0.530 & 0.521 & 0.523 & 0.587 & 0.617 & 0.586 & 0.360 & 0.624 & 0.627 & 0.601 & 0.630 \\
				co-GE   & 0.684 & 0.527 & 0.512 & 0.551 & 0.505 & 0.512 & 0.616 & 0.687 & 0.633 & 0.679 & 0.654 & 0.658 & 0.651 & 0.686\\
				\midrule
				avg. & 0.693 & 0.624 & 0.612 & 0.573 & 0.607 & 0.612 & 0.539 & 0.695 & 0.682 & 0.535 & 0.648 & 0.651 & 0.661 & 0.696 \\
				avg. rank & 3.00 & 8.18 & 9.18 & 11.18 & 11.18 & 9.18 & 12.45 & 2.45 & 4.18 & 7.27 & 7.36 & 7.73 & 7.73 & 2.27 \\
				\bottomrule 
			\end{tabular}
		}
	\end{center}
\end{table*}

\section{General edge weights}\label{app:general_edge_weights}
As mentioned in Remark~\ref{rem:general_edge_weights}, all the observations in this work are based on layer structures, and the edge weights are limited to integer ones representing the number of occurrences.
In order to examine the generality of our observations and the proposed algorithm,
we examined several blockchain transaction datasets~\citep{kilicc2022analyzing, kilicc2022parallel}, where each edge weight is a real number representing the amount of the corresponding transaction.
On the transaction datasets mentioned above, the correlations between the edge weights and the number of common neighbors are very low (consistently less than $0.1$).
Therefore, our observations and the proposed algorithm can not be directly applied to these datasets, which is a limitation of this work.

For general weighted graphs, we may make use of the known observation that real-world weighted graphs often have a heavy-tailed edge-weight distribution~\citep{barrat2004modeling,kumar2020retrieving,starnini2017robust}.
We leave as a potential future direction the exploration of the relation between edge weights and the number of common neighbors on weighted graphs with more general edge weights, as well as weighted graphs with edge weights having different real-world meanings (other than the number of occurrences).

\section{The indicativeness of the number of common neighbors}\label{app:why_cn_auc}
In Section~\ref{subsec:why_cns}, we have compared the indicativeness of the number of common neighbors with some baseline quantities measured by the point-biserial correlation coefficients (which are mathematically equivalent to the Pearson correlation coefficients).
We would like to also examine this relation using other metrics, e.g., the area under the ROC curve (AUC).

In Table~\ref{tab:why_cn_auc}, we report the additional results measured by the AUC.
Specifically, for each quantity, we first perform logistic regression between the sequence of the quantity and the binary indicators of repetition, then we compute the AUC of the output prediction.
Although there are some baseline quantities achieving marginally higher AUC values, the number of common neighbors is still one of the most promising quantities, especially when we consider its simplicity.

\begin{table}[t!]
	\begin{center}
		\caption{\textbf{The saturation point of the fractions of adjacent pairs does not change much when we change the threshold in the definition from $100\%$ to $99\%$.} For each layer of each dataset, we report the saturation point of the fractions of adjacent pairs with threshold $100\%$ on the left and the saturation point with threshold $99\%$ on the right.}
		\label{tab:saturation_pts_99_100}
		\resizebox{0.6\linewidth}{!}{%
			\begin{tabular}{ lccccc }
				\toprule
				\textbf{dataset} & layer-$1$ & layer-$2$ & layer-$3$ & layer-$4$ & layer-$5$ \\
				\midrule
				OF          & 241/234 & 190/190 & 157/157 & 134/134 & 119/119 \\ 
				\midrule
				FL          & 64/64 & 31/31 & 17/17 & - & - \\
				\midrule
				th-UB       & 73/73 & 30/30 & 19/19 & 18/18 & 15/15 \\
				th-MA       & 372/372 & 145/145 & 114/114 & 84/84 & 63/63 \\
				th-SO       & 685/685 & 208/208 & 134/134 & 97/97 & 74/74 \\
				\midrule
				sx-UB       & 152/152 & 63/63 & 48/48 & 36/36 & 31/31 \\
				sx-MA       & 185/185 & 113/113 & 75/75 & 60/60 & 51/51 \\
				sx-SO       & 886/886 & 407/407 & 221/221 & 169/169 & 120/120 \\
				sx-SU       & 202/202 & 96/96 & 63/63 & 48/48 & 36/36 \\
				\midrule
				co-DB       & 83/83 & 36/20 & 22/17 & 20/17 & 16/14 \\
				co-GE       & 74/74 & 52/46 & 34/32 & 28/27 & 24/22 \\
				\bottomrule %
			\end{tabular}
		}
	\end{center}
\end{table}

\section{Saturation points}\label{app:saturation_pts}
In Definition~\ref{def:saturation_pts}, we have defined the saturation point of the fraction of weighty edges (adjacent pairs) as the number $c$ of common neighbors such that all (i.e., $100\%$) the edges (pairs) sharing $c$ common neighbors are weighty (adjacent).
Theoretically, this definition can be less robust since a single edge (pair) that is not weighty (adjacent) can affect the whole group of edges (pairs) sharing the same number of common neighbors.
We have chosen to use the current definition for simplicity and clarity,
while it is possible to make the concept more robust by using a less ``absolute'' threshold, e.g., ``$99\%$''.

In Table~\ref{tab:saturation_pts_99_100}, we show how the saturation points in the datasets used in our experiments change when we change the ``$100\%$'' in the definition to ``$99\%$''.
In the real-world datasets that we use, such a change makes no big difference.
Therefore, we claim that such a simple and clear definition is expected to work well in practice,
but practitioners may take the issue discussed above into consideration for better robustness.

\begin{table}
	\centering
	\caption{\textbf{The parameter sensitivity results of \ours.} We report the performance of \ours with four different parameter settings on the eleven real-world datasets used in our experiments, with the average rank of \ours among all the baseline methods and \ours for each parameter.}
	\label{tab:params_res}         
	\begin{adjustbox}{scale=0.8}
		\begin{tabular}{c|c|c|c|c|c|c|c|c}
			\toprule
			metric & \multicolumn{2}{c|}{KSCN} & \multicolumn{2}{c|}{KSND} & \multicolumn{2}{c|}{DACC} & \multicolumn{2}{c}{NetSimile} \\
			\midrule
			dataset & {(0.98, 1.02)} & {(0.7, 1.3)} & {(0.98, 1.02)} & {(0.7, 1.3)} & {(0.98, 1.02)} & {(0.7, 1.3)} & {(0.98, 1.02)} & {(0.7, 1.3)} \\
			\midrule
			OF    & {0.391$\pm$0.002} & {0.074$\pm$0.003} & {0.254$\pm$0.000} & {0.069$\pm$0.002} & {0.057$\pm$0.001} & {0.158$\pm$0.006} & {12.754$\pm$0.038} & {8.087$\pm$0.320} \\
			FL    & {0.425$\pm$0.001} & {0.241$\pm$0.017} & {0.159$\pm$0.002} & {0.070$\pm$0.002} & {0.141$\pm$0.003} & {0.080$\pm$0.007} & {16.563$\pm$0.063} & {8.918$\pm$0.171} \\
			th-UB & {0.482$\pm$0.002} & {0.168$\pm$0.015} & {0.188$\pm$0.002} & {0.157$\pm$0.009} & {0.187$\pm$0.001} & {0.129$\pm$0.004} & {18.661$\pm$0.031} & {9.327$\pm$0.224} \\
			th-MA & {0.239$\pm$0.003} & {0.186$\pm$0.006} & {0.071$\pm$0.001} & {0.175$\pm$0.000} & {0.151$\pm$0.003} & {0.262$\pm$0.001} & {15.680$\pm$0.028} & {14.854$\pm$0.074} \\
			th-SO & {0.544$\pm$0.001} & {0.126$\pm$0.008} & {0.109$\pm$0.002} & {0.129$\pm$0.002} & {0.189$\pm$0.000} & {0.146$\pm$0.001} & {18.698$\pm$0.069} & {13.786$\pm$0.098} \\
			sx-UB & {0.076$\pm$0.003} & {0.416$\pm$0.003} & {0.027$\pm$0.000} & {0.118$\pm$0.010} & {0.016$\pm$0.000} & {0.071$\pm$0.004} & {6.026$\pm$0.115} & {17.414$\pm$0.094} \\
			sx-MA & {0.072$\pm$0.001} & {0.176$\pm$0.009} & {0.146$\pm$0.001} & {0.098$\pm$0.003} & {0.043$\pm$0.000} & {0.059$\pm$0.000} & {9.397$\pm$0.041} & {11.684$\pm$0.084} \\
			sx-SO & {0.239$\pm$0.001} & {0.315$\pm$0.004} & {0.178$\pm$0.003} & {0.202$\pm$0.001} & {0.037$\pm$0.000} & {0.038$\pm$0.001} & {N/A} & {N/A} \\
			sx-SU & {0.103$\pm$0.002} & {0.376$\pm$0.000} & {0.022$\pm$0.000} & {0.116$\pm$0.005} & {0.023$\pm$0.001} & {0.057$\pm$0.003} & {6.366$\pm$0.144} & {17.237$\pm$0.112} \\
			co-DB & {0.329$\pm$0.013} & {N/A} & {0.105$\pm$0.007} & {N/A} & {0.227$\pm$0.005} & {N/A} & {15.731$\pm$0.236} & {N/A} \\
			co-GE & {0.181$\pm$0.008} & {0.431$\pm$0.009} & {0.105$\pm$0.001} & {0.375$\pm$0.005} & {0.127$\pm$0.004} & {0.341$\pm$0.001} & {10.871$\pm$0.038} & {17.722$\pm$0.056} \\
			\midrule
			AVG   & 0.280  & 0.287 & 0.124 & 0.178 & 0.109 & 0.157 & 13.075 & 14.306 \\
			\midrule
			A.R.  & 3.09  & 4.45  & 3.09  & 4.64  & 3.64  & 5.27  & 3.30   & 3.90 \\
			\bottomrule
			\multicolumn{9}{c}{}\\
			\toprule
			metric & \multicolumn{2}{c|}{KSCN} & \multicolumn{2}{c|}{KSND} & \multicolumn{2}{c|}{DACC} & \multicolumn{2}{c}{NetSimile} \\
			\midrule
			dataset & {(0.9, 1.1)} & {(0.5, 1.5)} & {(0.9, 1.1)} & {(0.5, 1.5)} & {(0.9, 1.1)} & {(0.5, 1.5)} & {(0.9, 1.1)} & {(0.5, 1.5)} \\
			\midrule
			OF & 0.247$\pm$0.005 & 0.105$\pm$0.004 & 0.177$\pm$0.003 & 0.144$\pm$0.003 & 0.030$\pm$0.001 & 0.185$\pm$0.001 & 10.471$\pm$0.404 & 9.617$\pm$0.075 \\ 
			FL & 0.212$\pm$0.012 & 0.323$\pm$0.007 & 0.056$\pm$0.005 & 0.107$\pm$0.010 & 0.054$\pm$0.010 & 0.086$\pm$0.009 & 12.116$\pm$0.062 & 11.626$\pm$0.231 \\ 
			th-UB & 0.375$\pm$0.002 & 0.221$\pm$0.007 & 0.269$\pm$0.005 & 0.093$\pm$0.007 & 0.212$\pm$0.001 & 0.043$\pm$0.003 & 17.422$\pm$0.089 & 9.637$\pm$0.028 \\ 
			th-MA & 0.350$\pm$0.002 & 0.222$\pm$0.011 & 0.184$\pm$0.001 & 0.099$\pm$0.001 & 0.305$\pm$0.001 & 0.158$\pm$0.001 & 15.438$\pm$0.140 & 14.418$\pm$0.092 \\ 
			th-SO & 0.393$\pm$0.003 & 0.152$\pm$0.004 & 0.206$\pm$0.001 & 0.082$\pm$0.002 & 0.199$\pm$0.000 & 0.071$\pm$0.002 & 17.851$\pm$0.090 & 13.295$\pm$0.049 \\ 
			sx-UB & 0.141$\pm$0.003 & 0.526$\pm$0.008 & 0.108$\pm$0.001 & 0.209$\pm$0.011 & 0.016$\pm$0.000 & 0.176$\pm$0.011 & 13.209$\pm$0.099 & 19.761$\pm$0.501 \\ 
			sx-MA & 0.153$\pm$0.000 & 0.266$\pm$0.004 & 0.071$\pm$0.002 & 0.106$\pm$0.004 & 0.090$\pm$0.000 & 0.072$\pm$0.002 & 6.167$\pm$0.147 & 14.421$\pm$0.134 \\ 
			sx-SO & 0.143$\pm$0.002 & 0.428$\pm$0.001 & 0.297$\pm$0.001 & 0.143$\pm$0.001 & 0.030$\pm$0.001 & 0.098$\pm$0.003 & N/A & N/A \\ 
			sx-SU & 0.135$\pm$0.002 & 0.492$\pm$0.003 & 0.137$\pm$0.002 & 0.180$\pm$0.004 & 0.027$\pm$0.000 & 0.158$\pm$0.011 & 13.521$\pm$0.123 & 19.791$\pm$0.260 \\ 
			co-DB & N/A & N/A & N/A & N/A & N/A & N/A & N/A & N/A \\ 
			co-GE & 0.486$\pm$0.001 & 0.464$\pm$0.003 & 0.386$\pm$0.000 & 0.288$\pm$0.005 & 0.344$\pm$0.000 & 0.299$\pm$0.004 & 17.732$\pm$0.023 & 17.244$\pm$0.080 \\ 
			\midrule
			AVG & 0.299 & 0.336 & 0.214 & 0.162 & 0.154 & 0.151 & 14.775 & 15.211 \\ 
			\midrule
			A.R. & 3.18 & 4.45 & 5.73 & 4.82 & 4.36 & 5.27 & 3.60 & 4.30 \\ 
			\bottomrule
		\end{tabular}%
	\end{adjustbox}
\end{table}                

\section{Parameters}\label{app:choose_params}
As discussed in Section~\ref{sec:eval}, \ours uses only two parameters ($a$ and $k$), and for the eleven real-world datasets used in our experiments, we only use \textit{two} parameter settings $(a, k) \in \setpar{(0.98, 1.02), (0.7, 1.3)}$ without fitting to the ground-truth edge weight.

In Table~\ref{tab:params_res}, we report the results on the \textit{parameter sensitivity} of \ours,
where we provide the performance of \ours with four different parameter settings (including the two considered parameter settings $(a, k) \in$ $\{(0.98, 1.02),$ $(0.7, 1.3)\}$).
In the table, for each parameter setting, we also report the average rank of \ours (as in Tables~\ref{tab:main_res} and \ref{tab:main_res:two}) if we use the setting for \ours.

One observed limitation of \ours is its considerable \textit{parameter sensitivity}.
That is, using \ours with different parameters can give fairly different performances.
Therefore, it is important to find a well-working group of parameters when using \ours.

First, we would like to emphasize that finding the best parameters is not a trivial problem.
This is because, for a given topology, multiple plausible groups of edge weights are possible (consider, e.g., multiple snapshots of an evolving weighted graph).

Fortunately, \ours works well without extensive parameter searching (specifically, only with two parameter settings) in the form of $(1 - x, 1 + x)$,
and we have observed that for datasets within the same domain, we can use the same parameter setting.
Notably, even simply using the same parameters $(a, k) = (0.98, 1.02)$, \ours still achieves the best average value for each metric (compare the values in Tables~\ref{tab:main_res}, \ref{tab:main_res:two}, and \ref{tab:params_res}).

Another interesting and insightful observation is that the well-performing parameters of the datasets are seemingly correlated to the correlation between the numbers of common neighbors and the repetition of edges.
Based on the point-biserial correlation coefficients between the sequences of the numbers of common neighbors and the binary indicators of repetition in Table~\ref{tab:pearson-rep}, we can see that for all the datasets with a high coefficient (specifically, larger than $0.30$), the parameter setting $(a, k) = (0.7, 1.3)$ performs better than $(a, k) = (0.98, 1.02)$, while for all the dataset with a relatively low coefficient, the situation is the opposite.
Although the repetition of edges cannot be directly obtained merely from the topology, this provides insights into the reasons why \ours with different parameters shows different performance on different datasets.
We leave finding optimal parameters of \ours for a given topology as a future direction.

\begin{table}[t!]
	\begin{center}
		\caption{\textbf{The predicted edge weights output by \ours can enhance the community detection performance of the Louvain method.}
			For seven real-world datasets, we report the community detection performance of the Louvain method on the original unweighted graph and the weighted counterpart with edge weighted output by \ours.
			The performance is measured by adjusted rand index (ARI) and normalized mutual information (NMI). The better results are marked in bold.}
		\label{tab:comm_detect}
		\resizebox{0.9\linewidth}{!}{%
			\begin{tabular}{ l|cc|cc }
				\toprule
				\textbf{dataset}    & ARI (unweighted) & ARI (\ours) & NMI (unweighted)  & NMI (\ours)    \\
				\midrule				
				cora                
				& $0.2481 \pm 0.0189$   & $\mathbf{0.2507 \pm 0.0151}$
				& $0.4546 \pm 0.0069$   & $\mathbf{0.4570 \pm 0.0055}$ \\
				citeseer
				& $0.0937 \pm 0.0069$   & $\mathbf{0.0950 \pm 0.0059}$
				& $0.3287 \pm 0.0027$   & $\mathbf{0.3292 \pm 0.0020}$ \\
				pubmed
				& $0.0946 \pm 0.0034$   & $\mathbf{0.0948 \pm 0.0083}$   
				& $0.1774 \pm 0.0033$   & $\mathbf{0.1779 \pm 0.0035}$ \\                             
				computer
				& $0.3147 \pm 0.0119$   & $\mathbf{0.3201 \pm 0.0199}$   
				& $0.5411 \pm 0.0083$   & $\mathbf{0.5459 \pm 0.0056}$ \\
				photo
				& $0.5696 \pm 0.0301$   & $\mathbf{0.5796 \pm 0.0074}$   
				& $0.6673 \pm 0.0165$   & $\mathbf{0.6722 \pm 0.0069}$ \\
				cornell           
				& $0.0230 \pm 0.0006$   & $\mathbf{0.0274 \pm 0.0011}$   
				& $0.0956 \pm 0.0019$   & $\mathbf{0.1014 \pm 0.0030}$ \\
				texas
				& $0.0513 \pm 0.0025$   & $\mathbf{0.0758 \pm 0.0007}$   
				& $0.0698 \pm 0.0014$   & $\mathbf{0.0835 \pm 0.0030}$ \\
				wisconsin
				& $0.0230 \pm 0.0039$   & $\mathbf{0.0290 \pm 0.0045}$   
				& $0.0911 \pm 0.0057$   & $\mathbf{0.0977 \pm 0.0047}$ \\
				\bottomrule %
			\end{tabular}
		}
	\end{center}
\end{table}

\section{An application: community detection}\label{app:application_com_det}
As mentioned in the introduction (Section~\ref{sec:intro}), assigning realistic edge weights to a given topology is an important and practical problem, and the predicted edge weights output by \ours can be used in many practical applications.
Here, we showcase one possible application, where we use the predicted edge weights to enhance the performance of community detection algorithms on graphs.

We use seven datasets with ground-truth clusters but without groud-truth edge weights, including
(1-2) citation networks \textit{cora} and \textit{citeseer}~\citep{sen2008collective},
(3-5) website networks \textit{cornell}, \textit{texas}, and \textit{wisconsin}~\citep{pei2020geom}, and
(6-7) co-purchasing networks \textit{computer} and \textit{photo}~\citep{shchur2018pitfalls}.\footnote{More details of these datasets can be found at \url{https://pytorch-geometric.readthedocs.io/en/latest/notes/data_cheatsheet.html}~\citep{fey2019fast}.}

For each dataset, we compare the performance of the Louvain method~\citep{blondel2008fast} on the original unweighted graph and on the weighted counterpart with edge weights predicted by \ours.
For simplicity, we only predict a single layer, i.e., layer-$2$. 
The parameter settings $(a, k)$ used on the datasets are: 
$(0.95, 1.05)$ for \textit{cora};
$(0.99, 1.01)$ for \textit{citeseer}, \textit{pubmed}, \textit{computer}, and \textit{photo}; 
and $(0.9, 1.1)$ for \textit{cornell}, \textit{texas}, and \textit{wisconsin}.
We then measure the performance w.r.t the adjusted rand index (ARI) and the normalized mutual information (NMI).

See Table~\ref{tab:comm_detect} for the detailed results,
where we can observe that the predicted edge weights consistently improve the community detection performance of the Louvain method, and we leave further exploration of this application (e.g., the reasons behind this improvement) as a future direction.
In our understanding, the edge weights output by \ours reinforce the local structures and thus enhance the performance.
This has been shown in some works~\citep{satuluri2011local,benson2016higher,tsourakakis2017scalable}, where triangle counts are shown to be a helpful feature for community detection.
Even on weighted graphs with ground-truth edge weights, one can still use PEAR to obtain another group of edge weights and use the edge weights predicted by PEAR as additional features.

\begin{table}[t!]
	\begin{center}
		\caption{\textbf{Evaluation using graph motifs.} For each dataset, and for each $2 \leq i \leq 5$, we compare the layer-$i$ output by each method, by counting the number of different $3$-motifs (induced subgraphs of size $3$, up to graph isomorphism) and comparing the distributions (i.e., vectors of size $3$). 
			The final performance is measured by the L1 difference (which is equivalent to the total variance) between the distribution in the original graph and that in the output one.
			The better results (i.e., smaller errors) are marked in bold.
			Note that SEB* uses the ground-truth number of edges in each layer.
		}
		\label{tab:res_motifs}
		\resizebox{0.8\linewidth}{!}{%
			\begin{tabular}{ l|c|c|c|c|c|c|c|c }
				\toprule
				\textbf{dataset} & 
				\multicolumn{2}{c|}{layer-$2$} & 
				\multicolumn{2}{c|}{layer-$3$} & 
				\multicolumn{2}{c|}{layer-$4$} & 
				\multicolumn{2}{c}{layer-$5$} \\            
				\midrule
				\textbf{method}
				& SEB* & \ours 
				& SEB* & \ours
				& SEB* & \ours
				& SEB* & \ours \\
				\midrule
				OF          
				& {6.26E-2} & \bolden{1.05E-2}
				& {1.11E-1} & \bolden{1.30E-2}
				& {1.41E-1} & \bolden{4.44E-2}
				& {1.57E-1} & \bolden{6.40E-2}
				\\
				\midrule
				FL          
				& {1.57E-2} & \bolden{3.38E-3}
				& {2.51E-2} & \bolden{6.27E-3}
				& {3.46E-2} & \bolden{1.37E-2}
				& {4.59E-2} & \bolden{2.54E-2}
				\\
				\midrule
				th-UB       
				& {2.89E-2} & \bolden{1.82E-2}
				& {6.88E-2} & \bolden{3.09E-2}
				& {1.14E-1} & \bolden{4.00E-2}
				& {1.55E-1} & \bolden{3.20E-2}
				\\
				th-MA       
				& {3.09E-2} & \bolden{2.12E-2}
				& {5.47E-2} & \bolden{3.19E-2}
				& {7.30E-2} & \bolden{3.46E-2}
				& {8.71E-2} & \bolden{2.90E-2}
				\\
				th-SO       
				& {2.48E-3} & \bolden{7.30E-4}
				& {4.75E-3} & \bolden{3.84E-3}
				& \bolden{6.61E-3} & {1.55E-2}
				& \bolden{8.07E-3} & {3.21E-2}
				\\
				\midrule
				sx-UB       
				& {2.09E-3} & \bolden{1.10E-4}
				& {3.24E-3} & \bolden{2.85E-3}
				& \bolden{3.74E-3} & {6.76E-3}
				& \bolden{3.71E-3} & {1.15E-2}
				\\
				sx-MA       
				& {6.06E-3} & \bolden{2.13E-3}
				& {1.11E-2} & \bolden{5.41E-3}
				& {1.54E-2} & \bolden{7.49E-3}
				& {1.95E-2} & \bolden{9.20E-3}
				\\
				sx-SO       
				& {2.52E-4} & \bolden{4.99E-5}
				& \bolden{3.73E-4} & {9.10E-4}
				& \bolden{4.49E-4} & {2.50E-3}
				& \bolden{5.39E-4} & {4.56E-3}
				\\
				sx-SU       
				& {2.81E-3} & \bolden{1.19E-3}
				& {5.04E-3} & \bolden{5.65E-4}
				& {7.38E-3} & \bolden{3.01E-3}
				& {1.00E-2} & \bolden{5.38E-3}
				\\
				\midrule
				co-DB       
				& {4.56E-5} & \bolden{7.72E-6}
				& {5.59E-5} & \bolden{1.19E-5}
				& {6.15E-5} & \bolden{1.69E-5}
				& {6.59E-5} & \bolden{2.23E-5}
				\\
				co-GE       
				& {1.26E-4} & \bolden{4.84E-5}
				& {1.65E-4} & \bolden{8.43E-5}
				& {1.92E-4} & \bolden{1.13E-4}
				& {2.16E-4} & \bolden{1.40E-4}
				\\
				\bottomrule %
			\end{tabular}
		}
	\end{center}
\end{table}

\section{Evaluation using graph motifs}\label{app:res_motifs}
In this section, we provide the additional results where we evaluate predicted edge weights by analyzing graph motifs in each layer.
Specifically, for each dataset, and for each $2 \leq i \leq 5$, we compare the layer-$i$ output by each method, by counting the number of different $3$-motifs (induced subgraphs of size $3$, up to graph isomorphism) and comparing the distributions.
There are three kinds of (nonempty) $3$-motifs (up to graph isomorphism):
(1) a single edge and an isolated node,
(2) a wedge (i.e., an open triangle), and
(3) a (closed) triangle.
In each output layer, we iterate all the $3$-subgraphs and count the $3$-motifs.
Then we use their ratios to obtain a sum-one vector of size $3$ for each case.
The final performance is measured by 
the L1 difference (which is equivalent to the total variance~\citep{garner1956relation})
between the distribution in the original graph and that in the output one.
This comparison is related to the difference in average clustering coefficients in that both comparisons involve triangles, but the evaluation using graph motifs provides a different perspective, focusing more on the local patterns.
For brevity, we only report the results of \ours and SEB-N, which perform consistently better than the other method.
See Table~\ref{tab:res_motifs} for the detailed results,
where in most cases \ours shows better performance.

\color{black}

\end{document}